\newtheorem{observation}{Observation}
\newtheorem{claim}[observation]{Claim}
\newtheorem{lemma}{Lemma}
\newtheorem{definition}{Definition}
\newtheorem{assumption}{Assumption}
\newtheorem{theorem}{Theorem}
\newtheorem{proposition}{Proposition}
\DeclareSymbolFont{rmlargesymbols}{OMX}{mdbch}{m}{n}
\DeclareMathSymbol{\rmintop}{\mathop}{rmlargesymbols}{82}
\DeclareMathOperator*{\argmax}{argmax}
\newenvironment{claimclone}[1]{\noindent
\textbf{Claim~\ref{#1}} (restated).\em}{\par}
\newenvironment{lemmaclone}[1]{\noindent
\textbf{Lemma~\ref{#1}} (restated).\em}{\par}
\newenvironment{obsclone}[1]{\noindent
\textbf{Observation~\ref{#1}} (restated).\em}{\par}
\title{Game of Coins}
\author{
Alexander Spiegelman\thanks{Alexander Spiegelman is
   grateful to the Azrieli Foundation for the award of an
   Azrieli Fellowship.} \\
Electrical Engineering\\
Technion IIT\\
Haifa, Israel \\
  \texttt{sashas@campus.technion.ac.il} \\
\And
Idit Keidar \\
Electrical Engineering \\
Technion IIT \\
Haifa, Israel \\
  \texttt{idish@ee.technion.ac.il}
\And
Moshe Tennenholtz\thanks{Moshe Tennenholtz is supported by EU
project 740435 - MDDS} \\
Industrial Engineering \\
Technion IIT \\
Haifa, Israel \\
  \texttt{moshet@ie.technion.ac.il}
}
\begin{document}

\maketitle


\begin{abstract}

We formalize the current practice of strategic mining in
multi-cryptocurrency markets as a game, and prove that any
better-response learning in such games converges to
equilibrium.
We then offer a reward design scheme that moves the system
configuration from any initial equilibrium to a desired one for
any better-response learning of the miners. Our work introduces
the first multi-coin strategic attack for adaptive and learning
miners, as well as the study of reward design in a multi-agent
system of learning agents.

\end{abstract}
\section{Introduction}

Cryptocurrencies are an arms race.
Hundreds of digital coins have
crept into the worldwide market in the last
decade~\cite{MarketState}, including more than a dozen with over
a billion dollar Market Cap, e.g.,~\cite{buterin2014next,
BitcoinCash, Litecoin, Cardano, Neo}.
The vast majority of cryptocurrencies are based on the
notion of proof of work (PoW)~\cite{Bitcoin}.
As a result, the major strategic players in the context of
cryptocurrencies are {\em miners} who devote their power to
solving computational puzzles to find PoWs~\cite{Bitcoin,
buterin2014next}.

The miners for a particular coin usually gain rewards that are
proportional to the power they invest in the coin out of the
total invested power (in the coin) by all miners.
Each coin, therefore, can be viewed as having some \emph{weight}
that reflects the reward it divides among its miners.
In practice, a coin's weight (or reward) depends on its
transaction rate, transaction fees, and its fiat exchange rate.

While the above description is not complete, it does capture the
fundamental decision faced by the miner: where should I mine?
One indication for reward-based coin switching can be
found online in websites like
www.whattomine.com~\cite{whattomine}, where miners enter their
mining parameters (technology, power, cost, et cetra) and get a
list of coins they can mine for, ordered by their profitability.
Another interesting example happened on November 12
(2017)~\cite{bitinfocharts}, when a dramatic change in the
Bitcoin to Bitcoin Cash~\cite{BitcoinCash} (a spin-off from
Bitcoin) exchange rate led to a major inrush of miners from
Bitcoin to Bitcoin Cash (see Figure~\ref{figure:inrush}).

\begin{figure}[th]
    \centering
    \begin{subfigure}[t]{0.49\textwidth}
        \centering
        \includegraphics[width=0.95\textwidth]{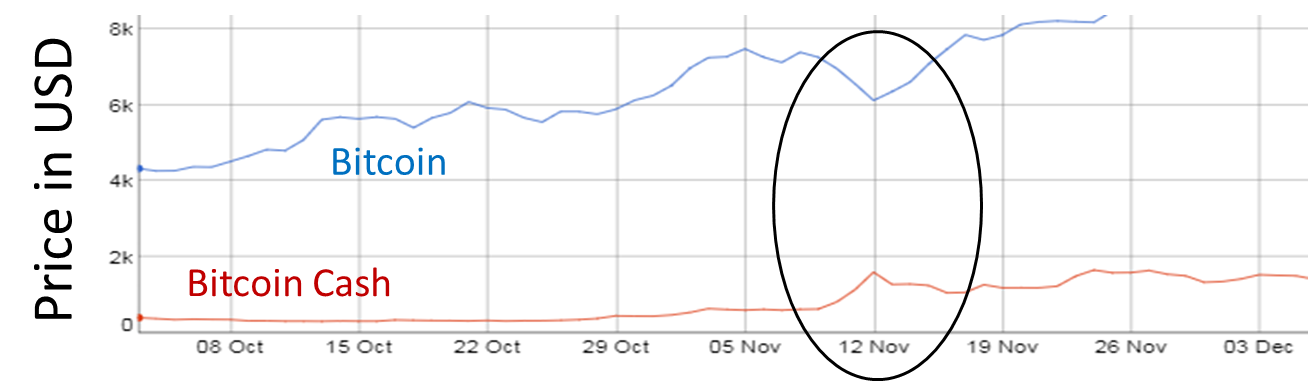}
        \caption{Bitcoin and Bitcoin Cash exchange rates over
        time.}
    \end{subfigure}%
    ~
    \begin{subfigure}[t]{0.49\textwidth}
        \centering
        \includegraphics[width=0.95\textwidth]{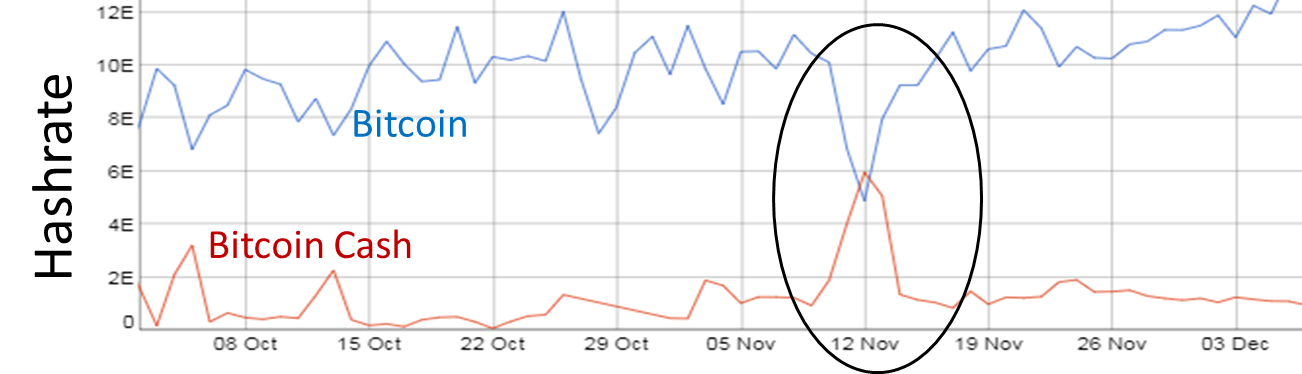}
        \caption{Hashrate corresponds to the number of
        miners.}
    \end{subfigure}
    \caption{Miners move from Bitcoin to Bitcoin Cash.}
    \label{figure:inrush}
\end{figure}

All in all, the structure of the cryptocurrency market suggests
that we face here a game among miners, where each miner wishes
to mine coins of heavy weights while avoiding competition with
other miners.
In this paper we introduce for the first time the study of the
cryptocurrency market as a game, consisting of a set of
strategic players (miners) with possibly different mining powers
and a set of coins with possibly different rewards (weights).
The miners are free to choose to mine for any coin from the set,
and we consider general better-response learning of
the miners.
That is, whenever any miner may benefit from deviating
(i.e., changing the coin it mines for), some miner will take a
step that improves his payoff; we allow an arbitrary sequence
of such individual improvement steps (sometimes called improving
path~\cite{MondererShapley96}).
In our first major result we prove that any such better response
learning converges to a (pure) equilibrium regardless of miner
powers and coin rewards!
This result is obtained by showing an \emph{ordinal potential},
which according to~\cite{MondererShapley96}, implies that
arbitrary better response learning converges to equilibrium.

Having at hand the above fundamental result, we move to a
discussion of strategic
manipulation~\cite{Nisan:2007:AGT:1296179}.
While many efforts have been invested in the study of
crypto-related manipulations~\cite{SelfishMining,
OptimalSelfishMining, StubbornMining, MinersDilemma}, we
introduce for the first time the manipulation of the miners' learning and optimization process.
Given that a shift in the weight of a coin
may influence miner behavior~\cite{whattomine}, in the
cryptocurrency setting, it is quite possible for an interested
party to affect this weight, either by
creating additional transactions with high fees (sometimes called whale
transactions~\cite{liao2017incentivizing}) or by manipulating
the coin exchange rate~\cite{gandal2018price,
priceMuniplautionm, whalesMuniplaution, BitcoinMuniplaution}.
This way, a miner (or another interested party) can attempt to
change the system equilibrium to a better one for them.
We show that under broad circumstances, for every
equilibrium of such a game, there exists a miner and another
equilibrium in which the miner's payoff is higher.
The question is therefore: can one design rewards (i.e.,
temporarily increase coin weights) in a way that will lead the
system from a given equilibrium to a desired one, so that the
system will remain in the desired equilibrium after reverting to
the original weights? Note that such reward design allows the
manipulator to pay a finite cost while gaining an advantage
indefinitely.

The above reward design problem is challenging since miners
might take \emph{any} better response step, and may make their
moves in any order. 
Given the (modified) weights, we can use our previous major
result to claim that any better response learning will converge
to an equilibrium.
Notice that the latter may not be the desired
one, but now we can modify the rewards again.
In the second major result of this paper we show that such
desired reward design for learning agents is feasible!
Namely, we provide a (multi-step) algorithm for assigning
rewards in equilibrium states that moves learning agents from any
initial equilibrium to a desired one.

In summary, our contributions are as follows:
\begin{enumerate}
  \item We formalize strategic mining in multi-cryptocurrency
  markets as a game (Section~\ref{sec:model}).
  
  \item We prove that any better-response learning in such games,
  starting from an arbitrary configuration, converges to
  equilibrium (Section~\ref{sec:equilibriums}).
  
  \item We show that, in many cases, for every equilibrium
  there is a miner and another equilibrium in which the miner's
  payoff is higher (Section~\ref{sec:betterEquilibrium}).

  \item We offer a reward design scheme that moves the system
  configuration from any initial equilibrium to a desired
  one for any better-response learning of the miners
  (Section~\ref{sec:movingEquilibrium}).
\end{enumerate}
For space limitations, the proofs of some of the claims we
state here are deferred to Appendices~C - F.


\subsection{Related work}

Results on better response learning convergence to pure
equilibrium are rare and are typically restricted to games with
exact potential~\cite{NIPS2017_7216,MondererShapley96}, which
coincide with congestion games.
We show that our game does not have an exact potential
(Section~\ref{sec:equilibriums}), and in fact our game belongs
to the larger class of ID congestion games, where the payoff of
a player depends on the player and the identity of other players
who choose a similar resource, rather than on their number only.
While there exist extensions of congestion games in which
better-response learning converges to equilibrium (e.g., a
restricted form of player-specific congestion games
\cite{Milchtaich96}, which does not include our game), such
results are extremely rare in the context of ID congestion
games.

Unlike works on learning in games that emphasize adapting
specific machine learning algorithms to minimize
regret~\cite{NIPS2017_7216, PalaiopanosPP17, NIPS2015_5763,
learninggamesbook}, we assume minimal rationality on behalf of
the players, i.e., that they follow an arbitrary
better response step improving their individual payoffs.

Our work also expands literature on reward
design~\cite{NIPS2010_4146,NIPS2017_7253,SorgSL10}, and
to the best of our knowledge, is the first to introduce reward
design for learning agents in a multi-agent setting. 
While seminal works in reward design assign/modify state rewards
in a reinforcement learning
context~\cite{DBLP:books/lib/SuttonB98}, we design rewards for
equilibrium states for \emph{any} better response learning.

Though several previous works presented game theoretical
analyses for cryptocurrencies~\cite{liao2017incentivizing,
SelfishMining, OptimalSelfishMining, StubbornMining,
carlsten2016instability, MiningGames, Johnson14, MinersDilemma,
Schrijvers16}, the vast majority of them deal (in one way or
another) with miners' incentives to follow the coins' mining
protocols.
Our work is the first to extend the study to a multi-coin
setting and establish fundamental game theoretical results
therein.

\section{Model}
\label{sec:model}

A system in our model is a tuple $\langle \Pi, C \rangle$,
where $\Pi$ is a finite set of $n$ miners
(players) and $C$ is a finite set of coins
(resources).
A miner $p \in \Pi$ has mining power $m_{p} \in
\mathbb{R_+}$, which it can invest in one of the
coins
, i.e., the
set of possible actions of $p$ is $C$.
We denote the \emph{set of configurations} of a system $Q
=\langle \Pi, C \rangle$ as $S_Q \triangleq C^n$ and denote by
$s.p$ the action of player $p \in \Pi$ in configuration $s \in
S_Q$.
When clear from the context, we omit the subscript indicating the
system and simply write $S$.
Given  $s \in S$ and $c \in C$, we denote by
$P_{c}(s) \subseteq \Pi$ the set of miners who mine for $c$ in 
$s$, i.e., $P_{c}(s) \triangleq ~ \{p \in \Pi \mid s.p =
c \}$, and by $M_{c}(s)$ their total mining power, i.e.,
$M_{c}(s) \triangleq \Sigma_{p \in P_{c}(s)} m_{p}$.
For $s \in S_Q, ~p \in \Pi, ~c \in C$ we denote by $(s_{-p},
c)$ the configuration that is identical to $s$ except that
$s.p$ is replaced by $c$. 

A \emph{reward function} $F:C \to \mathbb{R}_+$
maps coins to rewards.
A game $G_{\Pi,C,F}$ consists of a system $\langle \Pi, C
\rangle$ and a reward function $F$.
Every coin in a game $G_{\Pi,C,F}$ divides its reward among all
the players that mine for it, and the miners' payoffs are
defined as follows:
For $s \in S$, the \emph{revenue per
unit (RPU)} of coin $c$ in $s$ is $RPU_{c}(G_{\Pi,C,F})(s)
\triangleq \frac{F(c)}{M_{c}(s)}$.
When clear from the context, we omit the the parameter indicating
the game.
The payoff function of a miner $p \in \Pi$ is $u_{p}(s)
\triangleq m_{p} \frac{F(s.p)}{M_{s.p}(s)} = m_{p} \cdot
RPU_{s.p}(s)$.

Given a game $G_{\Pi,C,F}$, a configuration $s \in S$, a miner $p
\in \Pi$, and a coin $c \in C$, we say that $p$ \emph{moves}
from $s.p$ to $c$ in $s$ if it changes its action from $s.p$ to
$c$.
A move from $s.p$ to $c$ is a \emph{better response step} for $p$  
if $u_{p}(s) < u_{p}((s_{-p}, c))$.
We say that a miner $p \in \Pi$ is \emph{stable} in a
configuration $s$ in game $G_{\Pi,C,F}$ if $p$ has no better
response steps in $s$.
A configuration $s$ is \emph{stable} or a (pure)
\emph{equilibrium} if every miner $p \in \Pi$ is stable in $s$.
A \emph{better response learning} from $s$ in $G_{\Pi,C,F}$ is a
sequence of configurations resulting from a sequence of better
response steps starting from $s$, which is either infinite or
ends with a stable configuration. 
In case it is finite, we say that it \emph{converges} to its
final configuration. 

A function $f : S \rightarrow \mathbb{R}$ is an \emph{ordinal
potential} for a game $G_{\Pi,C,F}$ if for any two configurations $s,s'
\in S$ s.t.\ some better response step of a miner $p \in \Pi$
leads from $s$ to $s'$, it holds that $f(s) < f(s')$.
If, in addition, $f(s') - f(s) = u_{p}(s') - u_{p}(s)$, then $f$
is an \emph{exact potential}.
By~\cite{MondererShapley96}, if a game $G_{\Pi,C,F}$ has an
ordinal potential, then every better response learning
converges.

\section{Better response learning convergance}
\label{sec:equilibriums}

In this section we prove that although a game $G_{\Pi,C,F}$ has
no exact potential, every better-response learning of
the miners in game $G_{\Pi,C,F}$
converges to a stable configuration (pure equilibrium)
regardless of the sets $\Pi$ and $C$ and the reward function
$F$.
To gain intuition, the reader is referred to our
Appendix~A~and~B,
where we show how to construct a particular equilibrium in a game
$G_{\Pi,C,F}$ for any $\Pi,C$ and $F$, and give a simple ordinal
potential function for the symmetric case in which $F$ is a
constant function i.e., $\forall c,c' \in C$, $F(c) = F(c')$,
respectively.

\paragraph{No exact potential.} 
We start by showing that our game does not have an exact
potential.

\begin{proposition}

The game $G_{\Pi,C,F}$ does not always have an exact potential.

\end{proposition}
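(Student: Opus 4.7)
The plan is to exhibit a small concrete instance of $G_{\Pi,C,F}$ that admits a closed path of unilateral deviations whose net payoff change is nonzero, and then invoke the Monderer--Shapley characterization~\cite{MondererShapley96}: in any exact potential game the sum of payoff differences along every such closed path of alternating single-player deviations must vanish, so a single nonvanishing cycle suffices to rule out exact potentials.

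For the instance I would take the minimal asymmetric setup: two miners $p_1,p_2$ with mining powers $m_{p_1}=1$ and $m_{p_2}=2$, and two coins $c_1,c_2$ with equal rewards $F(c_1)=F(c_2)=1$. I would then traverse the natural $4$-cycle $(c_1,c_1)\to(c_2,c_1)\to(c_2,c_2)\to(c_1,c_2)\to(c_1,c_1)$, in which $p_1$ and $p_2$ alternate unilateral deviations. Plugging into $u_p(s)=m_p F(s.p)/M_{s.p}(s)$, each step produces a payoff change built from denominators $M_c(s)\in\{1,2,3\}$, and a direct calculation shows that the four signed payoff differences sum to a strictly positive number rather than zero, contradicting the Monderer--Shapley condition.

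The asymmetry $m_{p_1}\ne m_{p_2}$ is what drives the result: with equal powers the payoff on each coin would depend only on the \emph{number} of miners using it and the game would reduce to an ordinary congestion game, which always admits an exact potential. I do not expect a substantive obstacle; the only delicate point is choosing numerical parameters that are simple enough to compute by hand while still producing a cycle sum that is manifestly nonzero. The essential content is that the identity-dependent (ID-congestion) payoff structure mentioned in the introduction breaks exactly the symmetry an exact potential would demand.
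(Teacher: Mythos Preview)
Your proposal is correct and essentially identical to the paper's proof: the paper uses the same two-miner, two-coin instance with unequal powers (there $m_{p_1}=2,\ m_{p_2}=1$, in your version the labels are swapped) and equal rewards, traverses the same $4$-cycle of alternating unilateral deviations, and obtains cycle sum $2/3\neq 0$ to contradict the exact-potential condition via the telescoping identity (equivalently, the Monderer--Shapley cycle criterion you cite).
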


\begin{proof}

Let $G_{\Pi,C,F}$ be a game where $\Pi = \{p_1, p_2\}$,
$m_{p_1} =2, m_{p_2} = 1$, $C = \{c_1,c_2\}$, and $F(c_1) =
F(c_2) = 1$.
Assume by way of contradiction that $G_{\Pi,C,F}$ has an exact
potential function $H$, and consider the following four
configurations:
\begin{itemize}
  
  \item $s^1 = \langle c_1, c_1 \rangle$. Payoffs: $u_{p_1}(s^1)
  = \frac{F(c_1)\cdot m_{p_1}}{m_{p_1} + m_{p_2}} = 2/3$, 
  $u_{p_2}(s^1) = \frac{F(c_1)\cdot m_{p_2}}{m_{p_1} + m_{p_2}} =
  1/3$.
  
  \item $s^2 = \langle c_1, c_2 \rangle$. Payoffs:
  $u_{p_1}(s^2) = \frac{F(c_1)\cdot m_{p_1}}{m_{p_1}} = 1$, 
  $u_{p_2}(s^2) = \frac{F(c_2)\cdot m_{p_2}}{ m_{p_2}} = 1$.
  
  \item $s^3 = \langle c_2, c_2 \rangle$. Payoffs: $u_{p_1}(s^3)
  = \frac{F(c_2)\cdot m_{p_1}}{m_{p_1} + m_{p_2}} = 2/3$, 
  $u_{p_2}(s^3) = \frac{F(c_2)\cdot m_{p_2}}{m_{p_1} + m_{p_2}} =
  1/3$.
  
  \item $s^4 = \langle c_2, c_1 \rangle$. Payoffs: $u_{p_1}(s^4)
  = \frac{F(c_2)\cdot m_{p_1}}{m_{p_1}} = 1$, 
  $u_{p_2}(s^4) = \frac{F(c_1)\cdot m_{p_2}}{ m_{p_2}} = 1$.
  
\end{itemize}
Note that $H(s_2) - H(s_1) + H(s_3) - H(s_2) + H(s_4) -
H(s_3) + H(s_1) - H(s_4) = 0$.
However, by definition of exact potential, we get that 
$(H(s_2) - H(s_1))
+ (H(s_3) - H(s_2)) + (H(s_4) - H(s_3)) + (H(s_1) - H(s_4)) = 
(2/3) + (-1/3) + (2/3) + (-1/3) = 2/3 \neq 0$.
A contradiction.



\end{proof}


\paragraph{Ordinal potential.}
To show an ordinal potential, we use the following definitions:
%

For a configuration $s \in S$ in a game $G_{\Pi,C,F}$, we define
$list(s)$ to be the sequence of pairs in  $\{ \langle
RPU_c(s),c \rangle \mid c \in C\}$ ordered lexicographically
from smallest to largest.
Denote by $v_i(s)$ the coin (second element of the pair) in the
$i^{th}$ entry in $list(s)$.
Consider the ordered set $\langle L ,\prec_L \rangle$, where $L
\triangleq \{list(s) \mid s \in S\}$ is the set of all possible
lists in $G_{\Pi,C,F}$, and $\prec_L$ is the lexicographical
order.
The \emph{rank} of a list $list(s) \in L$, $rank(list(s))$, is
the rank of $list(s)$ in $\prec_L$ from smallest to largest.

%
%

Note that since $\Pi$ and $C$ are finite, we know that $S$ and
$L$ are finite.
The following two observations
establish a connection between better response steps and the
$RPUs$ of the associated coins.

 
\begin{observation}
\label{ob:neverMoveBack}

Consider a game $G_{\Pi,C,F}$, $s \in S$, 
$v_i(s) \in C$, and $p \in \Pi$ s.t.\ $s.p=v_i(s)$.
Then in every better response step of $p$ that changes $s.p$ to
a coin $v_j(s)$, it holds that $j > i$.

\end{observation}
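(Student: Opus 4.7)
The plan is to chase the definitions and then squeeze out a strict inequality between the two $RPU$ values. First I would record the two payoffs that the better-response condition compares. Since $s.p = v_i(s)$, the current payoff is $u_p(s) = m_p \cdot RPU_{v_i(s)}(s)$. When $p$ moves to $v_j(s) \neq v_i(s)$, its own power is added to the miners of $v_j$, so the new payoff is $m_p \cdot F(v_j(s)) / (M_{v_j(s)}(s) + m_p)$.

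Next I would write the better-response inequality $u_p(s) < u_p((s_{-p}, v_j(s)))$ and cancel the common factor $m_p$ to obtain
\[
RPU_{v_i(s)}(s) \;<\; \frac{F(v_j(s))}{M_{v_j(s)}(s) + m_p}.
\]
The key step is then to observe that $m_p > 0$ implies
\[
\frac{F(v_j(s))}{M_{v_j(s)}(s) + m_p} \;<\; \frac{F(v_j(s))}{M_{v_j(s)}(s)} \;=\; RPU_{v_j(s)}(s),
\]
so chaining the two inequalities gives $RPU_{v_i(s)}(s) < RPU_{v_j(s)}(s)$, a strict inequality of first coordinates.

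Finally, I would appeal to the construction of $list(s)$: since pairs are sorted lexicographically smallest-to-largest, a strict inequality on the $RPU$ component alone is enough to force $\langle RPU_{v_i(s)}(s), v_i(s)\rangle$ to precede $\langle RPU_{v_j(s)}(s), v_j(s)\rangle$, hence $j > i$. I do not expect any serious obstacle here; the only subtlety worth flagging is why the dilution of $v_j$'s $RPU$ by the incoming miner does not spoil the argument, which is precisely the point that the inequality goes in our favor (moving to $v_j$ yields payoff strictly less than $m_p \cdot RPU_{v_j}(s)$, yet still strictly greater than $m_p \cdot RPU_{v_i}(s)$, so $RPU_{v_j}(s)$ is strictly larger than $RPU_{v_i}(s)$ and tie-breaking by coin identity never enters).
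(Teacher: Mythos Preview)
Your proposal is correct and follows essentially the same route as the paper's proof: derive $RPU_{v_i(s)}(s) < \frac{F(v_j(s))}{M_{v_j(s)}(s)+m_p} \le RPU_{v_j(s)}(s)$ from the better-response condition, then read off $j>i$ from the sorted order of $list(s)$. The paper compresses the dilution step and the lexicographic conclusion into one line, whereas you spell out both (including the observation that the strict first-coordinate gap makes the coin-identity tiebreak irrelevant), but the argument is the same.
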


\begin{observation}
\label{ob:movingImproving}

Consider a game $G_{\Pi,C,F}$.
If some better response step from configuration $s$ to
configuration $s'$ of a miner $p$ changes $s.p = c$ to
$s'.p = c'$, then $RPU_c(s) < min(RPU_c(s'), RPU_{c'}(s'))$. 

\end{observation}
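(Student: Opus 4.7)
The plan is to split the single claim $RPU_c(s) < \min(RPU_c(s'), RPU_{c'}(s'))$ into its two constituent strict inequalities, namely $RPU_c(s) < RPU_{c'}(s')$ and $RPU_c(s) < RPU_c(s')$, and dispatch them independently. First, note that since the step is strictly improving we must have $c' \neq c$ (otherwise the configuration would be unchanged and the payoff identical), so the two inequalities are genuinely about two distinct coins.

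For $RPU_c(s) < RPU_{c'}(s')$, I would simply unfold the definition of a better response step. By hypothesis $u_p(s) < u_p(s')$, i.e.\ $m_p \cdot RPU_c(s) < m_p \cdot RPU_{c'}(s')$, and since $m_p > 0$ by the model assumption $m_p \in \mathbb{R}_+$, dividing both sides by $m_p$ yields the desired strict inequality. This is essentially a one-line rewriting of the definition of a better response.

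For $RPU_c(s) < RPU_c(s')$, the key observation is that $p$ was contributing $m_p$ to the total power on $c$ in $s$ and contributes nothing to $c$ in $s'$, while the actions of all other miners are unchanged. Hence $M_c(s') = M_c(s) - m_p < M_c(s)$. Since $F(c)$ is a fixed coin parameter, $RPU_c(s') = F(c)/M_c(s') > F(c)/M_c(s) = RPU_c(s)$. The only edge case worth flagging is when $p$ was the sole miner on $c$ in $s$: then $M_c(s')=0$ and the inequality holds under the natural convention $RPU_c(s')=+\infty$ (which one can adopt throughout, since no miner will ever deviate \emph{to} an occupied coin when an empty one is available at infinite RPU). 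Combining the two inequalities yields the observation.

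Nothing here looks hard: the argument is essentially bookkeeping on the definitions of payoff and $RPU$. The only mildly delicate point is being explicit about the case where $c$ becomes empty after $p$'s move, which I would handle once with the $+\infty$ convention rather than carry as an assumption throughout.
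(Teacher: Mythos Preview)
Your proof is correct and follows essentially the same approach as the paper's: both split the claim into the two inequalities $RPU_c(s) < RPU_{c'}(s')$ (from the definition of a better response) and $RPU_c(s) < RPU_c(s')$ (from $M_c(s') = M_c(s) - m_p$). You are in fact slightly more careful than the paper, which does not explicitly address the edge case $M_c(s')=0$.
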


\noindent We are now ready to prove that any game $G_{\Pi,C,F}$
has an ordinal potential function.

\begin{theorem}
\label{theorm:generalPotential}

For any finite sets $\Pi$ and $C$ of miners and coins and 
reward function $F$, $H(s) \triangleq rank(list(s))$ is an
ordinal potential in the game $G_{\Pi,C,F}$.

\end{theorem}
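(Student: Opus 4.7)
The plan is to show that any single better-response step strictly increases $list(s)$ in the lex order $\prec_L$; since $rank$ is a monotone function of that order, this immediately yields $H(s) < H(s')$ and hence the ordinal potential property.

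Let $p$ be the moving miner with $c = s.p$ and $c' = s'.p$, and abbreviate $r_c = RPU_c(s)$, $r_{c'} = RPU_{c'}(s)$, $r'_c = RPU_c(s')$, $r'_{c'} = RPU_{c'}(s')$. All other coins' $RPU$s are unchanged, so as multisets $list(s)$ and $list(s')$ agree except for the removal of $\Delta = \{\langle r_c, c\rangle, \langle r_{c'}, c'\rangle\}$ and the addition of $\Delta' = \{\langle r'_c, c\rangle, \langle r'_{c'}, c'\rangle\}$.

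The next step is to collect inequalities among these four values. Observation~\ref{ob:movingImproving} directly gives $r_c < r'_c$ and $r_c < r'_{c'}$. Since $p$ just joined $c'$, $M_{c'}$ strictly increases, hence $r'_{c'} < r_{c'}$; chaining with $r_c < r'_{c'}$ yields $r_c < r_{c'}$. All three resulting inequalities are strict on the first coordinate, so independently of tie-breaking by coin identity we get $\langle r_c, c\rangle \prec_L \langle r_{c'}, c'\rangle$, $\langle r_c, c\rangle \prec_L \langle r'_c, c\rangle$, and $\langle r_c, c\rangle \prec_L \langle r'_{c'}, c'\rangle$. Consequently $\langle r_c, c\rangle$ is the unique lex-minimum of $\Delta \cup \Delta'$, and it lies in $\Delta$ (hence in $list(s)$ but not in $list(s')$).

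Finally, I would invoke the following elementary fact about sorted sequences: if two sorted lists $A$ and $B$ agree outside equal-size symmetric-difference sets $\Delta, \Delta'$ and $\min_{\prec_L}(\Delta) \prec_L \min_{\prec_L}(\Delta')$, then $A \prec_L B$. To justify it, I would locate the position $k+1$ of $\langle r_c, c\rangle$ in $list(s)$: every pair strictly below $\langle r_c, c\rangle$ belongs to the common part of the two lists (it cannot be in $\Delta'$, whose minimum exceeds $\langle r_c, c\rangle$, and it cannot be the other element of $\Delta$), so both lists share an identical prefix of length $k$; at position $k+1$, $list(s)$ holds $\langle r_c, c\rangle$ while $list(s')$ holds a strictly greater pair, giving $list(s) \prec_L list(s')$. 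The main fiddly point I anticipate is this last bookkeeping, together with the edge case $M_c(s') = 0$ when $p$ was the sole miner of $c$, which I would handle by treating $r'_c = +\infty$ (still strictly above $r_c$) so that the argument above goes through unchanged.
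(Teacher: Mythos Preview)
Your proof is correct and follows essentially the same approach as the paper: both identify the position of $\langle RPU_c(s), c\rangle$ in $list(s)$, show the prefix before it is unchanged, and argue that the entry at that position strictly increases in $list(s')$. Your symmetric-difference framing is slightly cleaner than the paper's explicit case split on $v_i(s')$, and you correctly flag the $M_c(s')=0$ edge case that the paper leaves implicit.
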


\begin{proof}

Consider two configurations $s,s' \in S$ s.t.\ some better
response step of a miner $p \in \Pi$ leads from configuration
$s$ to configuration $s'$, and let $v_i(s) = s.p$ and $v_j(s)
=s'.p$.
We need to show that $H(s) < H(s')$. 
Since only the RPUs of $v_i(s)$ and $v_j(s)$ are affected we get
that
\begin{equation}
\label{eq:foralleq}
\forall k \neq i,j~ RPU_{v_k(s)}(s) =
RPU_{v_k(s)}(s').
\end{equation}
By Observation~\ref{ob:neverMoveBack}, we get that
$j> i$, and thus $\forall k, ~1 \leq k < i$, $
RPU_{v_k(s)}(s) = RPU_{v_k(s)}(s').$
By Observation~\ref{ob:movingImproving}, we get
that $min(RPU_{v_i(s)}(s'), RPU_{v_j(s)}(s')) > RPU_{v_i(s)}(s)$,
and thus, together with the definition of $v_i$ and
Equation~\ref{eq:foralleq}, we get that $\forall k, ~i \leq k \leq |C|$, $ RPU_{v_k(s)}(s') \geq
RPU_{v_i(s)}(s).$
Therefore, none of them ``move down'' to a position before $i$ in
$list(s')$ and so 
\begin{equation}
\label{eq:smallereq}
 \forall k, ~1 \leq k < i, 
\langle RPU_{v_k(s)}(s), v_k(s)\rangle = 
 \langle RPU_{v_k(s')}(s'), v_k(s')\rangle.
\end{equation}
That is, the first $i-1$ elements of $list(s)$ are equal to
the first $i-1$ elements of $list(s')$.
Hence, it suffices to show that the $i^{th}$ element of
$list(s')$ is lexicographically larger than the $i^{th}$ element
of $list(s)$.
Let  $v_l(s) = v_i(s')$. From Equation~\ref{eq:smallereq}, we
know that $l \geq i$, so there are two possible cases:
\begin{itemize}
  
  \item First, $l \in \{i, j\}$. The theorem follows from
  Observation~\ref{ob:movingImproving}.

   \item Second, $l > i,~ l\neq j$. In this case, 
   \begin{align*}
   \langle RPU_{v_i(s)}(s), v_i(s) \rangle & < 
   \langle RPU_{v_l(s)}(s), v_l(s) \rangle &
   (\text{lexicographical order of $list(s)$})\\
   &=
   \langle RPU_{v_l(s)}(s'), v_l(s) \rangle &
   (\text{Equation~\ref{eq:foralleq}})\\
   &=
   \langle RPU_{v_i(s')}(s'), v_i(s') \rangle &
   (\text{Definition of $v_l(s)$}),
   \end{align*}
   as needed.

\end{itemize}

\end{proof}

\section{There is often a better equilibrium}
\label{sec:betterEquilibrium}

Before moving to our second major result in which we describe
a manipulation through dynamic reward design that transitions
the system between equilibria, in this section we show that
under broad circumstances, in every stable configuration there
is at least one miner who has higher payoff in another stable
configuration. This means that such a miner will gain from
moving the system there. 
Specifically, we prove this for games that satisfy the following
assumptions (note that we use these assumptions only in this
section):

\begin{assumption}[Never alone]
\label{ass:alone}

For a configuration $s \in S$ in a game $G_{\Pi,C,F}$,
if there is a coin $c \in C$ s.t.\
$|P_{c}(s)| \leq 1$, then there is a miner $p \in \Pi$ s.t.\
changing $s.p$ to $c$ is a better response step for $p$.

\end{assumption}

\noindent Although this assumption cannot hold when $|\Pi| <
2|C|$, it often holds in practice since the number of
miners must be much larger than the number of coins for the
cryptocurrency to be secure (truly decentralized).

\begin{assumption}[Generic game]
\label{ass:genric}

For any two coins $c \neq c' \in C$ and two sets of players
$P, P' \subseteq \Pi$ in a game $G_{\Pi,C,F}$,
$\frac{F(c)}{\Sigma_{p \in P} m_{p}} \neq
\frac{F(c')}{\Sigma_{p \in P'} m_{p}}$.



\end{assumption}

\noindent This assumption is common in game theory
~\cite{DBLP:journals/mss/HolzmanL03}, and it makes sense in our
game since mining power in practice is measured in billions of
operations per hour and coin rewards are coupled with coin fiat
exchange rates, so exact equality is unlikely.

The following observation follows from
Assumption~\ref{ass:alone} and the fact that coins that are chosen by at
least one miner always divide their entire reward.
It stipulates that in every stable configuration, the sum of the
payoffs the miners get is equal to the sum of the coins' rewards.

\begin{observation}[All stable configurations are globally
optimal]
\label{obs:optimal}

For every stable configuration $s \in S$ in a game $G_{\Pi,C,F}$
under Assumption~\ref{ass:alone}, it holds that
${\sum}_{p \in \Pi} u_{p}(s) = {\sum}_{c \in C} F(c)$.

\end{observation}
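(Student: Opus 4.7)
The plan is to show that in any stable configuration every coin is mined by at least one miner, and then simply add up the per-coin payoffs, each of which equals that coin's full reward.

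First, I would establish the key consequence of stability plus Assumption~\ref{ass:alone}: in a stable $s$, every coin $c \in C$ satisfies $|P_c(s)| \geq 1$. Suppose to the contrary that some coin $c$ had $|P_c(s)| = 0$; then trivially $|P_c(s)| \leq 1$, so Assumption~\ref{ass:alone} produces a miner $p \in \Pi$ for whom moving $s.p$ to $c$ is a better response step. This contradicts the stability of $s$. Hence every coin is mined by at least one miner, and in particular $M_c(s) > 0$ for every $c \in C$.

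Next, I would partition $\Pi$ according to which coin each miner mines: since the sets $\{P_c(s)\}_{c \in C}$ form a partition of $\Pi$, we can rewrite
\[
\sum_{p \in \Pi} u_p(s) \;=\; \sum_{c \in C} \sum_{p \in P_c(s)} u_p(s).
\]
By the definition of the payoff function, for each $c$ and each $p \in P_c(s)$ we have $u_p(s) = m_p \cdot F(c) / M_c(s)$, so
\[
\sum_{p \in P_c(s)} u_p(s) \;=\; \frac{F(c)}{M_c(s)} \sum_{p \in P_c(s)} m_p \;=\; \frac{F(c)}{M_c(s)} \cdot M_c(s) \;=\; F(c),
\]
where the cancellation is legitimate precisely because the first step guarantees $M_c(s) > 0$. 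Summing over $c \in C$ yields $\sum_{p \in \Pi} u_p(s) = \sum_{c \in C} F(c)$, as claimed.

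The only real content here is the first step—translating Assumption~\ref{ass:alone} and stability into the conclusion that no coin is abandoned. Everything after that is a one-line algebraic regrouping, so there is no genuine obstacle beyond noticing that the assumption's hypothesis $|P_c(s)| \leq 1$ covers the empty case in particular.
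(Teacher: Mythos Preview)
Your proof is correct and matches the paper's approach exactly: the paper simply states that the observation ``follows from Assumption~\ref{ass:alone} and the fact that coins that are chosen by at least one miner always divide their entire reward,'' and your argument is precisely the spelled-out version of that sentence. (Note that Assumption~\ref{ass:alone} together with stability actually yields the stronger conclusion $|P_c(s)|\geq 2$ for every $c$, but you correctly observe that only $|P_c(s)|\geq 1$ is needed for the algebra.)
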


From Observation~\ref{obs:optimal} and
Assumption~\ref{ass:genric} it is easy to show the following
claim:  

\begin{claim}
\label{claim:higherpayoff}

Consider a game $G_{\Pi,C,F}$ under
Assumptions~\ref{ass:alone} and~\ref{ass:genric}.
If the game has more than one stable configuration,
then for every stable configuration $s$ there exist a
miner $p$ and a stable configuration $s'$ s.t.\ $u_{p}(s') >
u_{p}(s)$.

\end{claim}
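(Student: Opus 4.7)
The plan is to combine the globally-optimal observation with the generic-game assumption to show that any two distinct stable configurations must have some miner whose payoff strictly improves, and then invoke a pigeonhole argument on the equal total payoffs.

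Fix any stable configuration $s$; by hypothesis there is another stable configuration $s^\star \neq s$. First I would argue that there must exist a miner $p$ with $u_p(s) \neq u_p(s^\star)$. Suppose to the contrary that $u_p(s) = u_p(s^\star)$ for every $p \in \Pi$. Since $u_p(s) = m_p \cdot RPU_{s.p}(s)$ and $m_p > 0$, this yields $RPU_{s.p}(s) = RPU_{s^\star.p}(s^\star)$ for every $p$. Now pick any miner $p$ and let $c = s.p$, $c^\star = s^\star.p$. If $c \neq c^\star$, then writing $RPU_c(s) = F(c)/\sum_{q \in P_c(s)} m_q$ and similarly for $c^\star$ at $s^\star$, Assumption~\ref{ass:genric} forces $RPU_c(s) \neq RPU_{c^\star}(s^\star)$, contradicting the equality above. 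Hence $s.p = s^\star.p$ for every $p$, i.e.\ $s = s^\star$, contradicting the choice $s^\star \neq s$.

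Next, invoke Observation~\ref{obs:optimal} (which applies since both configurations are stable and Assumption~\ref{ass:alone} holds) to conclude that
\[
\sum_{p \in \Pi} u_p(s) \;=\; \sum_{c \in C} F(c) \;=\; \sum_{p \in \Pi} u_p(s^\star).
\]
Combined with the existence of some $p$ with $u_p(s) \neq u_p(s^\star)$, a trivial pigeonhole argument on $\sum_p \bigl(u_p(s^\star) - u_p(s)\bigr) = 0$ shows that at least one term must be strictly positive: there is a miner $p$ with $u_p(s^\star) > u_p(s)$. Taking $s' \triangleq s^\star$ completes the proof.

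The only subtle step is the first one — translating the genericity assumption into the claim that distinct stable configurations necessarily differ pointwise in payoffs. The rest is just arithmetic from the observation on global optimality. No potential-function machinery or analysis of better-response paths is needed here, since we only compare two stable configurations directly.
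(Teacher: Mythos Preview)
Your proof is correct and follows essentially the same approach as the paper's: find a miner whose payoff differs between the two stable configurations via Assumption~\ref{ass:genric}, then use Observation~\ref{obs:optimal} to conclude that the equal totals force some miner's payoff to strictly increase. The only cosmetic difference is that you frame the first step as a contradiction (assuming all payoffs equal and deducing $s = s^\star$), whereas the paper argues directly: since $s \neq s'$, some miner $p$ has $s.p \neq s'.p$, and Assumption~\ref{ass:genric} applied to those two coins immediately gives $u_p(s) \neq u_p(s')$.
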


It remains to show that $G_{\Pi,C,F}$ has more than one stable
configuration.
Consider $\Pi = \{p_1,\ldots,p_n\}$ s.t.\ $m_{p_1}
\geq m_{p_2} \geq \ldots \geq m_{p_n}$, and $\forall i, 1 \leq
i \leq n$, let $\Pi_i = \{p_1,\ldots,p_i\}$.
We first show that the game $G_{\Pi_2,C,F}$ has two different
configurations in which miners $p_1,p_2$ do not share a coin and
at most one of them is unstable.
Then, we inductively construct two configurations in
$G_{\Pi_i,C,F}$, $\forall i, 3 \leq i \leq n$, based on the two
configurations in $G_{\Pi_{i-1},C,F}$, in which all miners in
$\Pi_{i-1}$ keep their locations and all miners except maybe the
one that was unstable in $G_{\Pi_1,C,F}$ are stable.
The construction step is captured by Claim~\ref{claim:SRS},
where $p_{new} = p_i$ in the $i^{th}$ step.
%
%

\begin{claim}
\label{claim:SRS}

Let $F$ be a reward function.
Consider a system $Q= \langle \Pi,C \rangle$, and a
configuration $s \in S_Q$.
Now consider another system $Q'= \langle \Pi',C \rangle$ s.t.\
$\Pi' = \Pi \cup \{p_{new}\}$, $p_{new} \not\in \Pi$, and
$m_{p_{new}} \leq min\{m_{p} | p \in \Pi \}$.
Let $c = \underset{c' \in C} \argmax ~F(c')
\frac{m_{p_{new}}}{M_{c'}(s) + m_{p_{new}}}$ and consider a
configuration $s' \in S_{Q'}$ s.t.\ for all $p \in \Pi$
$s'.p=s.p$ and $s'.p_{new} = c$.
Then $p_{new}$ is stable in $s'$ in game $G_{\Pi',C,F}$, and
every player $p \in \Pi$ that is stable in $s$ in $G_{\Pi,C,F}$
is also stable in $s'$ in $G_{\Pi',C,F}$.

\end{claim}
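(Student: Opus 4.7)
The plan is to verify the two assertions of the claim independently. For the stability of $p_{new}$, I would note that in $s'$ all players other than $p_{new}$ occupy the same coins as in $s$, so if $p_{new}$ were to deviate to some $c' \neq c$, its payoff would be $m_{p_{new}} \cdot F(c')/(M_{c'}(s) + m_{p_{new}})$, while its current payoff is $m_{p_{new}} \cdot F(c)/(M_c(s) + m_{p_{new}})$. Since $c$ is defined as the $\argmax$ of exactly this quantity over all coins, the first expression is at most the second, so no better response step exists for $p_{new}$.

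For the second part, I would fix some $p \in \Pi$ that is stable in $s$ and show it remains stable in $s'$. The key observation is that $s'$ differs from $s$ only by placing $p_{new}$ on coin $c$. Hence for any $c' \neq c$, a hypothetical move of $p$ to $c'$ yields the same payoff in $s'$ as it would in $s$, namely $m_p \cdot F(c')/(M_{c'}(s) + m_p)$, whereas a hypothetical move of $p$ to $c$ yields a strictly smaller payoff in $s'$ than in $s$. I would then split into two cases according to whether $s.p = c$.

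When $s.p \neq c$, we have $u_p(s') = u_p(s)$ (the coin $s.p$ is unaffected by the arrival of $p_{new}$), and every candidate deviation yields a payoff no larger than it would in $s$; since $p$ is stable in $s$, it is stable in $s'$ as well. The more delicate case is $s.p = c$, where $u_p(s') = m_p \cdot F(c)/(M_c(s) + m_{p_{new}})$ is strictly less than $u_p(s)$, so stability in $s$ no longer suffices on its own. Here I would use the inequality chain
\[
\frac{F(c)}{M_c(s) + m_{p_{new}}} \;\geq\; \frac{F(c')}{M_{c'}(s) + m_{p_{new}}} \;\geq\; \frac{F(c')}{M_{c'}(s) + m_p},
\]
where the first step is precisely the $\argmax$ definition of $c$, and the second step uses the hypothesis $m_{p_{new}} \leq m_p$. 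Multiplying through by $m_p$ shows $u_p(s') \geq u_p((s'_{-p}, c'))$ for every $c' \neq c$, giving stability.

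The main obstacle is exactly this second subcase: the arrival of $p_{new}$ on $p$'s own coin strictly reduces $p$'s payoff, so stability in $s$ is not inherited for free. It is precisely the two structural hypotheses on $p_{new}$—that it joins the coin maximizing its own marginal RPU, and that its power is dominated by every incumbent's—that combine to guarantee the reduced payoff still beats every alternative. Everything else in the argument reduces to bookkeeping about which coins are affected by the insertion of $p_{new}$.
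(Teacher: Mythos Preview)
Your proof is correct and follows essentially the same approach as the paper: both verify $p_{new}$'s stability from the $\argmax$ definition, then split on whether $p$ shares coin $c$ with $p_{new}$, handling the first case by noting deviations are no better than in $s$ while $u_p$ is unchanged, and the second case via the chain $F(c)/(M_c(s)+m_{p_{new}}) \geq F(c')/(M_{c'}(s)+m_{p_{new}}) \geq F(c')/(M_{c'}(s)+m_p)$. The only cosmetic difference is that the paper factors this last chain into a separate lemma (``a smaller stable miner on the same coin forces a larger one to be stable'') and invokes it, whereas you inline the two inequalities directly.
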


Finally, we show that the two configurations we construct in
$G_{\Pi,C,F}$ are stable:
Let $p_{ns}$ be the (possibly) unstable miner.
By Assumption~\ref{ass:alone} (note that the assumption refers
only to game $G_{\Pi,C,F}$), $p_{ns}$ cannot be alone in a coin
(otherwise there must be another unstable miner), and thus it
shares the coin with a smaller stable miner, which we show
implies that $p_{ns}$ is stable.

%

\noindent Our results are captured by the following proposition,
which follows from Claim~\ref{claim:higherpayoff}
and and the inductive construction using Claim~\ref{claim:SRS}.

\begin{proposition}

Consider a game $G_{\Pi,C,F}$ under Assumptions~\ref{ass:alone}
and~\ref{ass:genric}. 
Then for every stable configuration $s$ in $G_{\Pi,C,F}$ there
exist a miner $p$ and a stable configuration
$s' \neq s$ in which $u_{p}(s') > u_{p}(s)$.

\end{proposition}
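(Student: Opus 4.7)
The plan is to reduce the proposition to Claim~\ref{claim:higherpayoff}, which asserts that whenever $G_{\Pi,C,F}$ has more than one stable configuration, every stable configuration is dominated in payoff (for some miner) by another. Hence it suffices to exhibit two distinct stable configurations of $G_{\Pi,C,F}$. I would construct them inductively, starting from the two-miner game $G_{\Pi_2,C,F}$ and adding one miner at a time using Claim~\ref{claim:SRS}.

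For the base case, order $\Pi = \{p_1,\ldots,p_n\}$ with $m_{p_1} \geq \cdots \geq m_{p_n}$ and assume $|C| \geq 2$ (with a single coin the proposition is vacuous). A short case analysis produces two configurations of $G_{\Pi_2,C,F}$ in which $p_1$ and $p_2$ occupy different coins and at most one of them is unstable; call the potentially unstable miner $p_{ns}$.

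For the inductive step, given a pair of configurations in $G_{\Pi_{i-1},C,F}$, I would apply Claim~\ref{claim:SRS} to each with $p_{new} = p_i$. Because miners are added in decreasing order of power, the precondition $m_{p_{new}} \leq \min_{p \in \Pi_{i-1}} m_p$ is automatic; the claim then guarantees that $p_i$ is stable in the extended configuration while every previously stable miner remains stable. After processing $p_3, \ldots, p_n$, I obtain two configurations of $G_{\Pi,C,F}$ in which every miner other than (possibly) $p_{ns}$ is stable, and which are distinct because $p_1$ occupies different coins in them.

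The main obstacle is closing the final gap: showing that $p_{ns}$ is in fact stable in these configurations. Fix one of them, $s$, and let $c = s.p_{ns}$. If $|P_c(s)| = 1$, Assumption~\ref{ass:alone} supplies another miner with a better-response move to $c$, contradicting the stability of every miner other than $p_{ns}$. Hence $p_{ns}$ shares $c$ with some other miner $p$; since the other of $\{p_1,p_2\}$ was placed on a different coin in the base case and no miner ever moves afterward, $p \in \{p_3,\ldots,p_n\}$, so $m_p \leq m_{p_{ns}}$. Stability of $p$ on $c$ yields $F(c)/M_c(s) \geq F(c')/(M_{c'}(s)+m_p)$ for every $c' \neq c$; replacing $m_p$ by the larger $m_{p_{ns}}$ only shrinks the right-hand side, so the same inequality holds for $p_{ns}$, proving it stable. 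With two distinct stable configurations in hand, Claim~\ref{claim:higherpayoff} (which uses Assumption~\ref{ass:genric}) completes the proof.
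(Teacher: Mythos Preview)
Your proposal is correct and follows essentially the same route as the paper: reduce to Claim~\ref{claim:higherpayoff}, build two distinct configurations by starting from the two-miner game with $p_1,p_2$ on different coins and inductively adding $p_3,\ldots,p_n$ via Claim~\ref{claim:SRS}, then close the gap for the one possibly unstable miner using Assumption~\ref{ass:alone} together with the ``smaller stable neighbor implies bigger is stable'' argument (the paper packages this last step as Claim~\ref{claim:SSTBS}). One small notational caveat: the potentially unstable miner $p_{ns}$ is not the same miner in both base configurations (it is $p_2$ in one and $p_1$ in the other), but since you explicitly fix one configuration at a time in the final paragraph, your argument goes through unchanged.
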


\section{Reward design: moving between equilibria}
\label{sec:movingEquilibrium}

In this section we consider a system $Q = \langle \Pi, C
\rangle$, where $\Pi = \{p_1,\ldots p_n\}$ s.t.\ $m_{p_1} >
m_{p_2} > \ldots > m_{p_n}$. 
For every reward function $F$ and every two stable
configurations $s_0,s_f \in S_Q$ in game $G_{\Pi,C,F}$ we
describe a mechanism to move the system from $s_0$ to the
desired configuration $s_f$ by temporarily increasing
coin rewards.
Note that once we lead the system to $s_f$, we can
return to the original rewards (i.e., stop manipulating coin
weights) because $s_f$ is stable in $G_{\Pi,C,F}$.
Therefore, a manipulator who gains from moving to a desired
stable configuration can do it with a bounded cost.

We first define a reward design function that maps
system configurations to reward functions.

\begin{definition}[reward design function]

Consider a system $Q$.
A reward design function $F$ for system $Q$ is a
function mapping every configuration $s \in S_Q$ to a reward
function, i.e., $F(s):C \to \mathbb{R}_+$. 

\end{definition}

\paragraph{Dynamic reward design.}

Consider a system $\langle \Pi,C \rangle$ and a reward function
$F$.
A dynamic reward design mechanism for game $G_{\Pi,C,F}$
is an algorithm that for any two stable configurations $s_0,s_f$
in $G_{\Pi,C,F}$ moves the system from $s_0$ to $s_f$ by
following the protocol in Algorithm~\ref{alg:DRDM}.

\begin{algorithm}
\caption{protocol to move a system $\langle \Pi,C \rangle$
with reward function $F$ from $s_0 $ to $s_f $.} 
\begin{algorithmic}[1]
\State   $s \leftarrow$ $s_0$

\Repeat
	\State choose a reward design function $H$ s.t.\ for all $c \in
	C$, $H(s)(c) \geq F(c)$
	\State allow better-response learning in $G_{\Pi,C,H(s)}$,
	starting from $s$, to converge to some stable
	\Statex \hspace*{4.7mm} configuration $s'$
	\Comment convergence is due to
	Theorem~\ref{theorm:generalPotential}
	
	\State $s \leftarrow$ $s'$
\Until{$s=s_f$} 
\end{algorithmic}
\label{alg:DRDM}
\end{algorithm}

\subsection{Reward design algorithm}

To describe a dynamic reward design algorithm we need to specify
the reward design function for every loop iteration in
Algorithm~\ref{alg:DRDM}.
Intuitively, we observe that miners with less mining power are
easily moved between coins, meaning that we can increase a coin
reward so that a \emph{small miner} with little mining
power will benefit from moving there, but \emph{bigger miners}
with more mining power prefer to stay in their current
locations.
Therefore, the idea is to evolve the current configuration to
$s_f \in S$ in $n = |\Pi|$ stages, where in stage $i$, we move
the $n -i +1$ miners with the smallest mining powers to the location
(coin) of miner $p_i$ in the final configuration $s_f$ (i.e.,
$s_f.p_i$) while keeping the remaining miners in their (final)
places.
To this end, we define $n$ intermediate configurations.
For $i$, $1 \leq i \leq n$, we define $s_i$ as:
\begin{equation}
\label{eq:stages}
s_i.p_k=
 \begin{cases}
 s_f.p_k & \forall 1 \leq k \leq i \\
 s_f.p_{i} & \forall i < k \leq n
\end{cases}
\end{equation}
That is, in $s_i$, miners $p_1,\ldots,p_i$ are in their final
locations and miners $p_i,\ldots,p_n$ are in the final location
of miner $p_i$. Note that $s_n = s_f$. Figure~\ref{fig:stagei}
illustrates the stage transitions in the algorithm.

\begin{figure}[th]
    \centering
    \begin{subfigure}[t]{0.44\textwidth}
        \centering
       \includegraphics[width=1.3in]{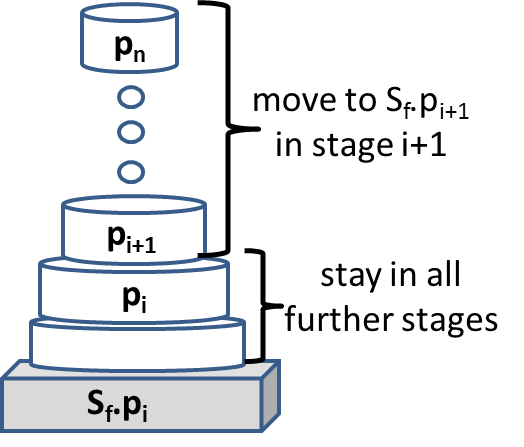}
        \caption{Configuration $s_i.$
        }
    \label{fig:stagei}
    \end{subfigure}%
    ~ 
    \begin{subfigure}[t]{0.44\textwidth}
        \centering
    \centering
        \includegraphics[width=2.3in]{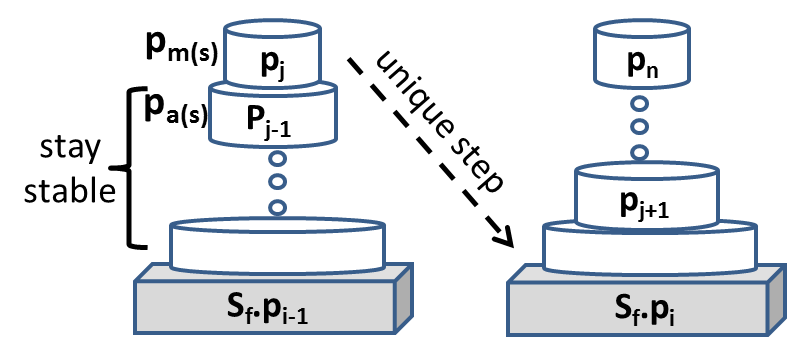}
        \caption{Iteration moving $p_{m_i(s)}$; $p_{a_i(s)}$ is
        the anchor.
        }
    \label{fig:anchor}
    \end{subfigure}
    \caption{Reward design algorithm: (a) stages; and (b)
    iteration within stage $i$.
    Boxes represent coins, discs represent miners. 
    The unlabeled bottom discs represent possible bigger miners
    who are in their final locations.}
    \label{fig:rewardChange}
\end{figure}

Notice that since we allow arbitrary better response learning
(in every iteration), choosing a reward design function is a
subtle task; miners can move according to any better response
step, and we cannot control the order in which miners move.
One may attempt to design a reward function so that in the
resulting game there is exactly one unstable miner with exactly
one better response step in the current configuration.
However, even given such a function, after that miner takes its
step, other miners might become unstable, which can in turn lead
to a learning process that depends on the order in which miners move and on the
choices they make (in case they have more than one better
response step).
Hence, the main challenge is to be able to restrict the set of
the possible stable configurations reached by learning
phase in each iteration.

In every loop iteration of stage $i > 1$ we pick a
miner $p_k$ that we want to move from $s_f.p_{i-1}$ to
$s_f.p_{i}$ (as explained shortly) and
choose the reward function carefully so that (1) $p_k$'s only better
response step is $s_f.p_i$, (2) all other miners are stable, and
(3) in every stable configuration reached by better response
learning after $p_k$'s step, $p_k$ is in $s_f.p_i$, all miners
$p_{k+1},\ldots, p_n$ are in either $s_f.p_{i-1}$ or $s_f.p_i$,
and all the other (bigger) miners remain in their (final)
locations.

Moreover, our proof shows by induction that our reward design
function of stage $i$ (defined below) guarantees that the set of
possible configurationas reached by learning in stage $i > 1$ is
\[ T_i \triangleq \{ s \in S  \mid (\forall k, ~1 \leq k \leq
i-1: ~s.p_k=s_f.p_k) \wedge (\forall k, i \leq k \leq n: ~s.p_k
\in \{s_f.p_i, s_f.p_{i-1})\}
\}.\]
Notice that the stage starts at $s_{i-1} \in T_i$. 
We now explain how we choose the reward design function for
stage $i$.
First, for every configuration $s \in T_i\setminus \{s_i\}$, the
index of the miner we want to move from $s_f.p_{i-1}$ to
$s_f.p_{i}$ (called \emph{mover}) is
\[
m_i(s)= min \{j | \forall l, ~j < l \leq n: s.p_l=s_f.p_i \}.
\]
Note that for every $s \in T_i$, $i \leq m_i(s) \leq n$.
Moreover, $p_{m_i(s)} \not\in P_{s_f.p_i}(s)$ and $m_i(s_{i-1})
= n$.
Let $a_i(s) = m_i(s) - 1$.
Intuitively, we use $p_{a_i(s)}$ as an \emph{anchor} in
configuration $s$; we choose a reward function that increases
the reward of coin $s_f.p_i$ as high as possible without making
the anchor unstable.
As a result, all the miners in $P_{s_f.p_{i-1}}(s)$ (who are
bigger than or equal to the anchor) remain stable, and miner
$p_{m_i(s)}$ has a unique better response step to move to $s_f.p_i$.
Figure~\ref{fig:anchor} illustrates $m_i(s)$ and $a_i(s)$ for
some configuration $s \in T_i$.

In order to make sure that miners not in
$P_{s_f.p_{i-1}}(s) \cup P_{s_f.p_{i}}(s)$ also remain stable,
and in order to guarantee that that every better response
learning after $p_{m_i(s)}$'s step converges to a configuration
in $T_i$, we choose a reward function that evens out the RPUs of
all coins other than $s_f.p_i$.
For $s \in S$, let $R(s) = max\{ RPU_{c}(s) \mid c \in C \}$
$\forall s \in S$. 
The reward design function $H_i$ for stage $i > 1$ is:
\begin{equation}
\label{eq:Hi}
\forall i>1 ~\forall s \in T_i ~\forall c \in C, ~H_i(s)(c) = 
   \begin{cases}
     R(s) \cdot (M_{c}(s) + m_{p_{a_i(s)}}) & \mbox{for}\    c =
     s_f.p_i\\
     R(s) \cdot M_{c}(s)	& \mbox{otherwise} 
\end{cases}
\end{equation}
Note that the RPUs of all coins except $s_f.p_i$ in the game
$G_{\Pi,C,H_i(s)}$ are equal to $R(s)$. 
In addition, note that if a miner bigger than or equal to
$p_{a_i(s)}$ moves to $s_f.p_i$, then $s_f.p_i$'s RPU becomes no
bigger than $R(s)$.
However, since $m_{p_{m_i(s)}} < m_{p_{a_i(s)}}$, $p_{m_i(s)}$
has a unique better response step to move to $s_f.p_i$.
Therefore, we get that our reward design function allows us to
control the first step of the learning process.
In the next section we give more intuition on how it also
restricts the stable configuration at the end of any learning
process at stage $i$ to the set $T_i$.


As for the fist stage, note that we need to move all miners
to coin $s_f.p_1$, so intuitively we only need to increase its
reward high enough. 
We therefore choose:
\begin{equation}
\label{eq:H1}
\forall s \in S ~\forall c \in C, H_1(s)(c) = 
\begin{cases}
	max\{ F(c') \mid c' \in C\}\cdot \Sigma_{p \in \Pi}
	m_p & \mbox{for}\    c = s_f.p_1\\
	F(c)	& \mbox{otherwise} 
\end{cases}
\end{equation}
In Algorithm~\ref{alg:DRDMAlgorithm} we present our reward
design algorithm, and in the next section we outline
the proof that every stage eventually completes.

\begin{algorithm}
\caption{Dynamic reward design algorithm.} 
\begin{algorithmic}[1]
\State   $s \leftarrow$ $s_0$

\For{i=1 \dots n} 
\Repeat 
\Comment{$H_i$ is defined in Equations~\ref{eq:Hi}
and~\ref{eq:H1}} \State allow better-response learning in
$G_{\Pi,C,H_i(s)}$, starting from $s$, to converge to some stable 
	\Statex \hspace*{1cm}  configuration $s'$
	
	\State $s \leftarrow$ $s'$
\Until{$s=s_i$}
\Comment{$s_i$ is defined in Equations~\ref{eq:stages}}
\EndFor

\end{algorithmic}
\label{alg:DRDMAlgorithm}
\end{algorithm}


\subsection{Proof outline}

The proof for stage 1 is straightforward so we skip it.
Consider stage $i > 1$.
We prove in the appendix the following technical
lemma about stable configurations in the stage:

\begin{lemma}
\label{lem:inv}

Consider a configuration $s \in T_i \setminus \{s_i\}$.
Then every better response learning in the game
$G_{\Pi,C,H_i(s)}$ that starts at $s$ converges to a
configuration $s' \in T_i$ such that:
\begin{enumerate}
  
  \item $\forall k, ~1 \leq k < m_i(s)$, $s'.p_k = s.p_k$. 
  
  \item $s'.p_{m_i(s)} = s_f.p_i$.  
  
\end{enumerate}
\end{lemma}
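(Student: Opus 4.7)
The plan is to use the ordinal potential of Theorem~\ref{theorm:generalPotential} to guarantee that every better response learning terminates, and to establish invariants that force the limit into the claimed form. Let $m = m_i(s)$, $a = m-1$, and write $c^+ = s_f.p_i$ and $c^- = s_f.p_{i-1}$ as local abbreviations. Throughout the learning I will track three invariants on the running configuration $s''$: (I1) $s'' \in T_i$; (I2) $s''.p_k = s.p_k$ for every $k < m$; and (I3) $s''.p_m \in \{c^-, c^+\}$, and once $p_m$ has moved to $c^+$ it remains there. These together imply both conclusions.

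First I would verify that the very first step is forced. Under $H_i(s)$, Equation~\ref{eq:Hi} gives $RPU_c(s) = R(s)$ for every $c \neq c^+$ and $RPU_{c^+}(s) = R(s)(M_{c^+}(s) + m_{p_a})/M_{c^+}(s) > R(s)$. The reward on $c^+$ is calibrated so that $p_a$ is exactly \emph{indifferent} between its current coin and $c^+$: if $p_a$ joined $c^+$, the RPU would drop to exactly $R(s)$, giving the same payoff $R(s)\cdot m_{p_a}$ as staying. By the strict ordering of mining powers, only miners strictly lighter than $p_a$, namely $p_m, p_{m+1}, \ldots, p_n$, can strictly improve by moving to $c^+$; since $p_{m+1}, \ldots, p_n$ are already there, the unique available better response is $p_m$ moving to $c^+$. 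Thus the invariants hold after the first step.

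The main technical step is to show that (I1), (I2), and (I3) are preserved by every subsequent better response step, so that all further motion is confined to swaps of small miners with indices $>m$ between $c^-$ and $c^+$. Once $p_m$ has moved, I would show by induction that the bounds $M_{c^+}(s'') \geq M_{c^+}(s) + m_{p_m}$ and $M_{c^-}(s'') \leq M_{c^-}(s) - m_{p_m}$ persist under any move allowed by the invariants. Given these, a big miner $p_k$ with $k < i$ at a coin $c \notin \{c^-, c^+\}$ cannot improve: moves to other outside coins only dilute, while moves to $c^+$ or $c^-$ cannot beat $R(s)$ because $m_{p_k} \geq m_{p_a} > m_{p_m}$. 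Miners at $c^-$, including the anchor and any $p_k$ with $i \leq k < m$ initially there, have $RPU_{c^-}(s'') \geq R(s)$ and by the anchor property cannot gain by moving to $c^+$. Finally, $p_m$ at $c^+$ earns strictly more than $R(s)\cdot m_{p_m}$ while any other coin offers at most $R(s)\cdot m_{p_m}$, so $p_m$ stays. Convergence then follows from Theorem~\ref{theorm:generalPotential}, and the limit $s'$ inherits the invariants, yielding the lemma.

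The main obstacle I anticipate is the circular dependency between the invariants: stability of the big miners and the anchor relies on the bounds on $M_{c^+}(s'')$ and $M_{c^-}(s'')$, while those bounds rely in turn on no bigger miner having moved. Disentangling this requires a simultaneous induction over the steps of the learning, and the $m_{p_a}$-offset in Equation~\ref{eq:Hi} is precisely what makes the anchor-indifference argument close the loop at each step.
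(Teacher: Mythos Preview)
Your overall architecture matches the paper's: establish that the first step must be $p_m$ moving to $c^+$, then carry a package of invariants through an induction on better-response steps, and invoke Theorem~\ref{theorm:generalPotential} for termination. The paper's invariants $\Psi_1$--$\Psi_3$ are your (I1)--(I3), and its $\Psi_4$--$\Psi_5$ are your mass bounds.

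The gap is in the mass bounds themselves: you state $M_{c^+}(s'') \ge M_{c^+}(s)+m_{p_m}$ and $M_{c^-}(s'') \le M_{c^-}(s)-m_{p_m}$, but these are false in general, and in fact are the wrong inequalities for the arguments you then run. After $p_m$'s move, the only further motion is small miners $p_l$ with $l>m$ drifting from $c^+$ back to $c^-$ (since $RPU_{c^-}$ has jumped above $R(s)$); this \emph{decreases} $M_{c^+}$ and \emph{increases} $M_{c^-}$, so your bounds can be violated at the very next step. The paper's bounds are two-sided:
\[
M_{c^-}(s)-m_{p_m}\ \le\ M_{c^-}(s'')\ \le\ M_{c^-}(s),\qquad
M_{c^+}(s)\ \le\ M_{c^+}(s'')\ \le\ M_{c^+}(s)+m_{p_m}.
\]
Both directions are needed. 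The upper bounds give $RPU_{c^-}(s'')\ge R(s)$ and $RPU_{c^+}(s'')>R(s)$, which is what keeps miners at $c^-$ and $c^+$ from leaving for outside coins and keeps $p_m$ put. The lower bounds are what you actually need to block big miners $p_k$ (with $k<m$) from entering: for $c^+$ one uses $M_{c^+}(s'')\ge M_{c^+}(s)$ together with $m_{p_k}\ge m_{p_a}$, and for $c^-$ one uses $M_{c^-}(s'')\ge M_{c^-}(s)-m_{p_m}$ together with $m_{p_k}\ge m_{p_m}$. Your argument ``moves to $c^+$ or $c^-$ cannot beat $R(s)$'' is correct in spirit but cannot be justified from the inequalities you wrote; in particular, an \emph{upper} bound on $M_{c^-}(s'')$ gives you nothing about the RPU a big miner would see upon joining $c^-$. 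Once you replace your one-sided bounds by the two-sided ones above (the upper bound on $M_{c^-}$ is proved via Observation~\ref{ob:movingImproving}: if $M_{c^-}$ ever exceeded $M_{c^-}(s)$ then the last mover into $c^-$ would have landed at RPU below $R(s)$), the induction closes exactly as you outline.
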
 

As part of the proof, we show that within stage $i$, all the
reached configurations (both stable and unstable) are in $T_i$. 
Let $c=s_f.p_{i-1}$ and $c' = s_f.p_{i}$.
After $p_{m_i(s)}$ moves to $c'$ according to
its only better response step, in the resulting configuration
$s'$, the RPUs of all coins not in $\{c, c'\}$ remain $R(s)$.
Moreover, $RPU_{c}(s') = \frac{H_i(c)}{M_c(s')} = 
 \frac{R(s)\cdot
M_{c}(s)}{M_{c}(s) - m_{p_{m_i(s)}}}$, and $RPU_{c'}(s') =
\frac{H_i(c')}{M_{c'}(s')}=
\frac{R(s)\cdot (M_{c}(s) + m_{p_{a_i(s)}})}{M_{c}(s) +
m_{p_{m_i(s)}}}$.
Therefore, although $RPU_{c}(s') > RPU_{c}(s)$, it is still
not high enough to drive miners not in $P_{c'}(s')$ (by
definition, bigger than $p_{m_i(s)}$) to move to it.
So the only miners that possibly have better response steps at
$s'$ are miners in $P_{c'}(s')$ who wish to move to $c$.
Moreover, the total mining power of the miners who actually move
to $c$ is smaller than $p_{m_i(s)}$, otherwise, $c$'s RPU will
go below $R(s)$.
In the proof we use the above intuition to formulate an invariant
that captures the lemma statement and prove it by induction on
better response steps. The lemma then follows from
Theorem~\ref{theorm:generalPotential} (every better
response learning converges to a stable configuration).

We next use Lemma~\ref{lem:inv} to prove that every stage $i > 1$
completes in a finite number of loop iterations.
To this end, we associate with every configuration $s\ T_i$ a
binary vector $v(s)$ indicating, for each $j \geq i$, whether
$p_j$ is in $P_{s_f.p_i}(s)$, where it needs to be at the end of
the stage.
Consider the ordered set $\langle V, \prec_v \rangle$,
where $V \triangleq \{0,1\}^{n-i+1}$ is the set of all binary
vectors of length $n-i+1$, and $\prec_v$ is the lexicographical
order.
For a configuration $s \in T_i$, we define $vec(s)$ to be a
vector in $V$ such that: 
\[
\forall j, ~1 \leq j \leq n-i+1, ~vec(s)[j]=
\begin{cases}
1 & \mbox{if    } ~p_{j + i -1} \in P_{s_f.p_i}(s)\\
0 & \mbox{otherwise}
\end{cases}
\]
and the function $\Phi_i: T_i \to \{1,..,|V|\}$ to be the rank
of $vec(s)$ in $V$.

\begin{theorem}
\label{theorm:lex}

Every stage $i > 1$ of Algorithm~\ref{alg:DRDMAlgorithm} 
completes in a finite number of loop iterations.

\end{theorem}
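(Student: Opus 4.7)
The plan is to use $\Phi_i$ as a strict monotone progress measure across the iterations of the \texttt{repeat} loop at stage $i$, and then to note that since $|V| = 2^{n-i+1}$ is finite, termination follows at once. The heavy technical work is packaged into Lemma~\ref{lem:inv}; once that is available, the remaining task is only to check that the two guarantees of the lemma interact correctly with the lexicographic order on $V$. The main (modest) obstacle is to see that the possible ``bouncing back'' of miners with index larger than $m_i(s)$ (who may leave $s_f.p_i$ for $s_f.p_{i-1}$ during the learning within a single iteration) does not undo the lex-order progress; this is precisely why the rank under $\prec_v$, rather than, say, the Hamming weight of $vec(s)$, is the right choice of potential.

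First I would verify that every iteration is well-defined and stays in $T_i$. The initial configuration $s = s_{i-1}$ lies in $T_i$: by Equation~\ref{eq:stages}, $s_{i-1}.p_k = s_f.p_k$ for all $k < i$, and $s_{i-1}.p_k = s_f.p_{i-1} \in \{s_f.p_{i-1}, s_f.p_i\}$ for $k \geq i$. If at some iteration the current configuration $s$ equals $s_i$, we are done; otherwise $s \in T_i \setminus \{s_i\}$, so Theorem~\ref{theorm:generalPotential} ensures that better response learning in $G_{\Pi,C,H_i(s)}$ converges, and Lemma~\ref{lem:inv} ensures that the limit $s'$ again lies in $T_i$ and satisfies (1) $s'.p_k = s.p_k$ for every $k < m_i(s)$ and (2) $s'.p_{m_i(s)} = s_f.p_i$. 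Hence the analysis may be applied inductively to $s'$.

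It remains to show $\Phi_i(s) < \Phi_i(s')$. Let $\ell = m_i(s) - i + 1$, the position of miner $p_{m_i(s)}$ in $vec(\cdot)$. For every $j < \ell$, the corresponding miner $p_{j+i-1}$ has index strictly less than $m_i(s)$, so by clause~(1) of Lemma~\ref{lem:inv}, $s'.p_{j+i-1} = s.p_{j+i-1}$ and hence $vec(s')[j] = vec(s)[j]$. At position $\ell$, the definition of $m_i(s)$ forces $s.p_{m_i(s)} \neq s_f.p_i$, so $vec(s)[\ell] = 0$; clause~(2) gives $s'.p_{m_i(s)} = s_f.p_i$, so $vec(s')[\ell] = 1$. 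Whatever happens at positions $j > \ell$ is irrelevant for the lex comparison, because the two vectors already differ at position $\ell$ with $vec(s')[\ell] > vec(s)[\ell]$. Thus $vec(s) \prec_v vec(s')$ and $\Phi_i(s) < \Phi_i(s')$. Since $\Phi_i$ takes values in $\{1,\ldots,|V|\}$, stage $i$ performs at most $|V| - \Phi_i(s_{i-1}) \leq 2^{n-i+1} - 1$ iterations before reaching $s = s_i$.
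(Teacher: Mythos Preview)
Your proof is correct and follows essentially the same approach as the paper: both argue that $s_{i-1}\in T_i$, use Lemma~\ref{lem:inv} inductively to keep each iteration inside $T_i$, and show that $\Phi_i$ strictly increases across iterations so that finiteness of $V$ forces termination. Your write-up is in fact more explicit than the paper's about the lexicographic comparison (identifying the pivot position $\ell=m_i(s)-i+1$, checking equality to its left and the $0\to 1$ flip at $\ell$), and you additionally supply the concrete bound $2^{n-i+1}-1$ on the number of iterations, which the paper omits.
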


\begin{proof}

By definitions of stage $i$ and set $T_i$, the first
configuration of stage $i$ is $s_{i-1} \in T_i$.
By definition of $m_i(s)$ , for all $s \in T_i \setminus
\{s_i\}$, $m_i(s) \not\in P_{s_f.p_i}(s)$.
By inductively applying Lemma~\ref{lem:inv}, we get that every
loop iteration in stage $i$ ends in a configuration in $T_i$.
Therefore, consider a loop iteration of stage $i$ that starts in
configuration $s \neq s_i$ and ends in configuration $s'$,
we get by Lemma~\ref{lem:inv} that $m_i(s) \in
P_{s_f.p_i}(s')$ and $\forall k, ~i \leq k < m_i(s)$, $s.p_k =
s'.p_k$.
Therefore $\Phi(s') > \Phi(s)$.
Now since the set $T_i$ is finite, we get the after a finite
number of iterations we reach configuration $s_i$.

\end{proof}

\section{Discussion}
\label{sec:discussion}

Our work studies and challenges the crypocurrency market from a
novel angle -- the strategic selections by adaptive miners
among multiple coins. 
There are several central followups one may consider. 
First, our reward design is effective for arbitrary
better-response learning, but one may wonder about its
speed of convergence under specific markets. 
In addition, we consider convergence to equilibrium, and one may
consider also convergence to a bad (possibly unstable)
configuration in which, for example, a particular miner will
have a dominant position in a coin, killing (at least for a
while) the basic guarantee of non-manipulation (security) for
that coin and allowing him to get a bigger portion of the reward.
One also may wonder about the asymmetric case where some coins
can be mined only by a subset of the miners.  

\newpage
\bibliographystyle{plain}
\bibliography{bibliography}

\newpage
\begin{appendices}
\section{Existence of an equilibrium}
\label{app:equilibrium}

\label{sub:ExEq}

We show here how to find a stable configuration (pure
equilibrium) in the game $G_{\Pi,C,F}$ for any $\Pi,C$, and $F$.
We do this by induction, selecting coins for miners in
descending order of mining power.

\begin{claim}
\label{claim:eq}

Consider a reward function $F$, a system $Q= \langle
\Pi,C \rangle$, and another system $Q'= \langle \Pi',C
\rangle$ s.t.\ $\Pi' = \Pi \cup \{p_{new}\}$, $p_{new} \not\in
\Pi$, and $m_{p_{new}} \leq min\{m_{p} | p \in \Pi \}$.
Then, if the game $G_{\Pi,C,F}$ has a stable configuration,
than the game $G_{\Pi',C,F}$ has a stable configuration as
well.

\end{claim}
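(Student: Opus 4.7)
The plan is to reduce Claim~\ref{claim:eq} directly to Claim~\ref{claim:SRS}, which was already stated (and presumably proved elsewhere) in the main text. Let $s \in S_Q$ be a stable configuration in $G_{\Pi,C,F}$, and set $c = \argmax_{c' \in C} F(c') \cdot m_{p_{new}}/(M_{c'}(s) + m_{p_{new}})$. Define a configuration $s' \in S_{Q'}$ by leaving each existing miner in place, $s'.p = s.p$ for every $p \in \Pi$, and by placing the new miner at $s'.p_{new} = c$.

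Claim~\ref{claim:SRS} applied to this construction yields exactly the two facts needed to verify stability of $s'$ in $G_{\Pi',C,F}$: first, $p_{new}$ itself is stable in $s'$ (because $c$ was chosen to maximize $p_{new}$'s payoff); second, every miner $p \in \Pi$ that was stable in $s$ remains stable in $s'$. Since $s$ was a stable configuration, every miner in $\Pi$ is stable in $s$ and hence stable in $s'$; together with stability of $p_{new}$, this gives that $s'$ is stable in $G_{\Pi',C,F}$, as required.

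So the claim is essentially a corollary of Claim~\ref{claim:SRS}, and the only real work is hidden in that lemma. The delicate point there — which I would flag when sanity-checking the reduction — is that inserting $p_{new}$ into $c$ does not destabilize any miner $p \in \Pi$, even if $s.p = c$, in which case $p$'s own RPU strictly shrinks. This case is handled by combining the argmax characterization of $c$ (which bounds the payoffs $p$ could get by deviating to any other coin) with the assumption $m_{p_{new}} \leq \min_{p \in \Pi} m_p$ (which ensures that replacing $p_{new}$'s mass by any $p$'s mass at an alternate coin only further reduces the deviation payoff). Beyond invoking Claim~\ref{claim:SRS}, no additional argument is needed.
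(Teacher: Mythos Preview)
Your reduction to Claim~\ref{claim:SRS} is correct: the construction of $s'$ is identical to the paper's, and Claim~\ref{claim:SRS} delivers exactly the stability of $p_{new}$ and the preservation of stability for every $p\in\Pi$, which together give that $s'$ is stable.

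The paper's own proof of Claim~\ref{claim:eq} does \emph{not} invoke Claim~\ref{claim:SRS}; it is self-contained and verifies stability of each miner directly by the same three-case analysis (the case $s'.p=c'\neq c$, the case $p=p_{new}$, and the case $p\neq p_{new}$ with $s'.p=c$). The argument you flag for the delicate third case is precisely what the paper carries out inline. So the mathematical content is the same; the only difference is packaging --- you factor through a lemma that the paper proves separately (and more generally, since Claim~\ref{claim:SRS} does not assume $s$ is stable), whereas the paper repeats the computation in place. Your route is cleaner given that Claim~\ref{claim:SRS} is already available, but it is not a genuinely different idea.
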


\begin{proof}

Let $s$ be a stable configuration in $G_{\Pi,C,F}$. 
We use it to build a configuration $s'$ in
$G_{\Pi',C,F}$ in the following way:
for all $p \neq p_{new}$ set $s'.p = s.p$, and set
$s'.p_{new}$ to $c = \underset{c' \in C} \argmax ~
F(c') \frac{m_{p_{new}}}{M_{c'}(s) + m_{p_{new}}}$.
We now show that configuration $s'$ is stable in
$G_{\Pi',C,F}$.

First, consider $p_{new}$. 
Since we pick $c$ to be $\underset{c' \in C} \argmax ~F(c')
\frac{m_{p_{new}}}{M_{c'}(s) + m_{p_{new}}}$, we get that
$\forall c' \in C$, $F(c) \frac{m_{p_{new}}}{M_{c}(s')} \geq
F(c') \frac{m_{p_{new}}}{M_{c'}(s') + m_{p_{new}}}$, so 
$p_{new}$ is stable in $s'$.
Next, consider a miner $p$ s.t. $s.p=c' \neq c$.
Since $s$ is stable, we know that $\forall c'' \in C$, $F(c')
\frac{m_{p}}{M_{c'}(s)} \geq F(c'')
\frac{m_{p}}{M_{c''}(s) + m_{p}}$.
Now since $M_{c}(s) < M_{c}(s')$ and for all $c'' \neq
c$ $M_{c''}(s) = M_{c''}(s')$, we get that $\forall c''
\in C, F(c') \frac{m_{p}}{M_{c'}(s')} \geq F(c'')
\frac{m_{p}}{M_{c''}(s') + m_{p}}$.
Meaning that $p$ is stable in $s'$.

Finally, consider $p \neq p_{new}$ s.t.\ $s'.p = c$. 
We need to show that for all $c' \neq c$, \ $F(c')
\frac{m_{p}}{M_{c'}(s') + m_{p}} \leq F(c)
\frac{m_{p}}{M_{c}(s')}$.
Again, since we pick $c$ to be $\underset{c' \in C} \argmax
~F(c') \frac{m_{p_{new}}}{M_{c_j}(s) + m_{p_{new}}}$, we
know that $\forall c' \neq c$, $F(c)
\frac{m_{p_{new}}}{M_{c}(s) + m_{p_{new}}} \geq F(c')
\frac{m_{p_{new}}}{M_{c'}(s) + m_{p_{new}}} $, and thus
$\forall c' \neq c$, $F(c) \frac{1}{M_{c}(s')} \geq F(c')
\frac{1}{M_{c'}(s') + m_{p_{new}}}$.
Now by the claim assumption, $m_{p_{new}} \leq m_{p}$.
Therefore, $\forall c' \neq c$, $F(c) \frac{1}{M_{c}(s')}
\geq F(c')
\frac{1}{M_{c'}(s') + m_{p_{new}}} \geq F(c')
\frac{1}{M_{c'}(s') + m_{p}}$.
Thus, $\forall c' \neq c$, $F(c) \frac{m_{p}}{M_{c}(s')}
\geq F(c') \frac{m_{p}}{M_{c'}(s') + m_{p}}$.
Meaning that $p$ is stable in $s'$.

\end{proof}


\begin{proposition}
\label{lem:eq}

For any set of of miners $\Pi = \{p_1,\ldots,p_n\}$, set of
coins $C$, and a reward function $F$, the game
$G_{\Pi,C,F}$ has a stable configuration.

\end{proposition}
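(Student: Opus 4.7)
The plan is a straightforward induction on the number of miners, with Claim~\ref{claim:eq} supplying the inductive step. I would first reorder $\Pi$ so that $m_{p_1} \geq m_{p_2} \geq \ldots \geq m_{p_n}$ and define $\Pi_i \triangleq \{p_1,\ldots,p_i\}$ for each $i$, $1 \leq i \leq n$. The statement to prove by induction on $i$ is that $G_{\Pi_i,C,F}$ admits a stable configuration; the proposition then follows by taking $i=n$.

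For the base case $i=1$, the system contains a single miner $p_1$, and I would assign it to some coin $c^\star \in \argmax_{c \in C} F(c)$. In this configuration $p_1$'s payoff equals $F(c^\star)$, while moving to any other coin $c$ would yield payoff $F(c) \leq F(c^\star)$; hence $p_1$ has no better-response step and the configuration is stable.

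For the inductive step, assume $G_{\Pi_i,C,F}$ has a stable configuration. Because the ordering is descending, the newly added miner $p_{i+1}$ satisfies $m_{p_{i+1}} \leq \min\{m_p \mid p \in \Pi_i\}$, which is precisely the hypothesis of Claim~\ref{claim:eq}. Applying that claim with $\Pi = \Pi_i$, $\Pi' = \Pi_{i+1}$, and $p_{new} = p_{i+1}$ produces a stable configuration in $G_{\Pi_{i+1},C,F}$, closing the induction. The only subtle point is orienting the induction in the right direction: Claim~\ref{claim:eq} lets us append a new miner only when its power is no larger than every existing one, so we must build $\Pi$ up starting from the most powerful miner and add miners in order of decreasing power. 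There is no other obstacle, since all the real work has already been done inside Claim~\ref{claim:eq}.
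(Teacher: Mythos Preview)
Your proof is correct and matches the paper's approach essentially verbatim: order the miners by nonincreasing power, handle the base case by placing the single miner on a reward-maximizing coin, and then repeatedly invoke Claim~\ref{claim:eq} to add miners one at a time. The only cosmetic difference is that you spell out the intermediate sets $\Pi_i$ explicitly, which the paper leaves implicit.
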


\begin{proof}

Order miners by mining power so that $m_{p_1} \geq m_{p_2}
\geq \ldots \geq m_{p_n}$.
First we show that the game $G_{\{p_1\},C,F}$
has a stable configuration.
Let $c = \underset{c' \in C} \argmax ~ F(c')$, and
define a configuration $s$ in $G_{\{p_1\},C,F}$ as follows:
$s.p_1 = c$.
Since $\forall c' \in C,~ F(c) \frac{m_{p_1}}{M_{c}(s)}
= F(c) \geq F(c') = F(c') \frac{m_{p_1}}{M_{c'}(s) +
m_{p_1}}$, we get that the configuration $s$ is stable in
$G_{\{p_1\},C,F}$. 
The lemma follows by inductively applying Claim~\ref{claim:eq}.

\end{proof}

\section{Ordinal potential for the symmetric case}
\label{app:EasyPotential}

We consider here the symmetric case where all coin rewards are
equal, and show that any game in this case has a simple
potential function.

\begin{proposition}

Consider a finite set of players $\Pi$, a finite set of coins
$C$, and a reward function $F$ s.t.\ $\forall c,c' \in C$ $F(c) =
F(c')$.
Then $H(s) \triangleq  \mathlarger{\sum}_{c \in C}
\frac{1}{M_{c}(s)}$ is a potential function in the game
$G_{\Pi,C,F}$.

\end{proposition}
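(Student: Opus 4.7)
The plan is to show that under the symmetry hypothesis $F(c) \equiv F$ the function $H$ is strictly monotone (in fact, strictly decreasing) along every better-response step, so that $-H$ is an ordinal potential in the sense of Section~\ref{sec:model} and every better-response learning converges by the Monderer--Shapley argument already invoked in Theorem~\ref{theorm:generalPotential}.

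First, because every coin carries the same reward $F$, cancellation in $u_{p}(s) = m_{p} F / M_{s.p}(s)$ reduces the better-response condition for miner $p$ moving from coin $c$ to coin $c'$ (transitioning $s \to s'$) to the single load inequality $M_{c}(s) > M_{c'}(s) + m_{p}$. Setting $A = M_{c}(s)$, $B = M_{c'}(s)$, and $m = m_{p}$, only the summands indexed by $c$ and $c'$ change, so a direct computation yields
\[
H(s') - H(s) = \left(\frac{1}{A-m} - \frac{1}{A}\right) + \left(\frac{1}{B+m} - \frac{1}{B}\right) = \frac{m}{A(A-m)} - \frac{m}{B(B+m)}.
\]
The crux is then a two-line manipulation: the better-response inequality $A > B + m$ automatically implies $A - m > B$, and multiplying these two strictly positive inequalities termwise gives $A(A-m) > B(B+m)$, whence $H(s') - H(s) < 0$. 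Thus $H$ strictly decreases at every better-response step, so $-H$ qualifies as an ordinal potential.

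The main obstacle will be the technicality that $H$ contains $1/M_{c}(s)$, which is undefined when $c$ is currently unused. I would resolve this by summing only over coins with $M_{c}(s) > 0$ and separately observing that a miner alone on a coin already collects the full reward $F$ and therefore has no strictly improving move; hence the set of used coins is non-decreasing along every better-response trajectory. Within a fixed used set the displayed computation above applies verbatim, and to absorb the steps that introduce a fresh coin I would promote $H$ to the lexicographic pair $\bigl(\lvert\{c : M_{c}(s) > 0\}\rvert,\, -H(s)\bigr)$, which strictly increases on every better-response step and takes finitely many values, yielding termination of every better-response learning regardless of the initial configuration.
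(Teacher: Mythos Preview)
Your proof is correct and follows essentially the same route as the paper: both reduce the better-response condition under equal rewards to the load inequality $M_{c}(s) > M_{c'}(s) + m_{p}$ and then verify algebraically that $H$ strictly decreases along such a step, your factorization $H(s')-H(s)=\tfrac{m}{A(A-m)}-\tfrac{m}{B(B+m)}$ being a slightly cleaner variant of the paper's common-denominator expansion. The paper's argument tacitly assumes every coin is occupied, whereas your lexicographic promotion $\bigl(\lvert\{c:M_{c}(s)>0\}\rvert,\,-H(s)\bigr)$ correctly patches the empty-coin corner case that the paper leaves unaddressed.
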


\begin{proof}

Consider two configurations $s,s' \in  S$ s.t.\ some better
response step of a miner $p \in \Pi$ leads from configuration
$s$ to configuration $s'$. 
We need to show that $H(s) > H(s')$.
Let $c = s.p$ and $c' =s'.p$.
By definition of a better response step,
$F(c')\frac{m_{p}}{M_{c'}(s) + m_{p}} > F(c)
\frac{m_{p}}{M_{c}(s)}$.
Since $F(c) = F(c')$, we get that $\frac{1}{M_{c'}(s)
+ m_{p}} >  \frac{1}{M_{c}(s)}$, and thus
\begin{align}
M_{c}(s) > M_{c'}(s) + m_{p}. \label{eq:sp}
\end{align}
Now since for every $c'' \notin \{c', c\}$, $M_{c''}(s) =
M_{c''}(s')$, we get that 
\[H(s') - H(s) = \frac{1}{M_{c'}(s')} +
\frac{1}{M_{c}(s')} - (\frac{1}{M_{c'}(s)} +
\frac{1}{M_{c}(s)}).\]
Moreover, since $p$ moves from $c$ to $c'$ in $s$, we get that
\[H(s') - H(s) = \frac{1}{M_{c'}(s) +
m_{p}} + \frac{1}{M_{c}(s) - m_{p}} -
(\frac{1}{M_{c'}(s)} + \frac{1}{M_{c}(s)}).\]
Thus, in order to show that $H(s) >
H(s')$, we need to show that 
\[\frac{1}{M_{c'}(s) +
m_{p}} + \frac{1}{M_{c}(s) - m_{p}} <
\frac{1}{M_{c'}(s)} + \frac{1}{M_{c}(s)}.\]
Equivalently, that
 \[\frac{M_{c}(s) - m_{p} + m_{p} +
M_{c'}(s)}{(M_{c'}(s) + m_{p})(M_{c}(s) - m_{p})}
< \frac{M_{c}(s) + M_{c'}(s)}{M_{c}(s)M_{c'}(s)}, \]
or that 
\[M_{c}(s)M_{c'}(s) < (M_{c'}(s) +
m_{p})(M_{c}(s) - m_{p}),\] 
or finally that
\[M_{c}(s)(M_{c'}(s) + m_{p}) - M_{c}(s)m_{p}
< (M_{c'}(s) + m_{p})M_{c}(s) - (M_{c'}(s) +
m_{p})m_{p}).\]
The proposition follows from Equation~\ref{eq:sp} 

\end{proof}

\section{Observations' proofs for the ordinal potential}
\label{app:ordinalPotential}

\begin{obsclone}{ob:neverMoveBack}
Consider a game $G_{\Pi,C,F}$, $s \in S$, 
$v_i(s) \in C$, and $p \in \Pi$ s.t.\ $s.p=v_i(s)$.
Then in every better response step of $p$ that changes $s.p$ to
a coin $v_j(s)$, it  that $j > i$.

\end{obsclone}

\begin{proof}

By the definition of a better response step, 
$\frac{F(v_j(s))}{M_{v_j(s)}(s) + m_{p}} >
\frac{F(v_i(s))}{M_{v_i(s)}(s)}$, and thus
$RPU_{v_j(s)}(s) = \frac{F(v_j(s))}{M_{v_j(s)}(s)} >
\frac{F(v_i(s))}{M_{v_i(s)}(s)} = RPU_{v_i(s)}(s)$.
By definition of $v(s)$, we get that $j >i$. 

\end{proof}

\begin{obsclone}{ob:movingImproving}
Consider a game $G_{\Pi,C,F}$.
If some better response step from configuration $s$ to
configuration $s'$ of a miner $p$ changes $s.p = c$ to
$s'.p = c'$, then $RPU_c(s) < min(RPU_c(s'), RPU_{c'}(s'))$.

\end{obsclone}

\begin{proof}

By definition of a better response step, $RPU_{c'}(s')
> RPU_{c}(s)$.
In addition, since $M_{c}(s') = M_{c}(s) - m_{p}$, we get
that $RPU_{c}(s') = \frac{F(c)}{M_{c}(s')} =
\frac{F(c)}{M_{c}(s) - m_{p}}  > \frac{F(c)}{M_{c}(s)} =
RPU_{c}(s)$.

\end{proof}

\section{There is often a better Eeuilibrium: proofs}
\label{app:oftenBettr}

We prove here the Claims from
Section~\ref{sec:betterEquilibrium}

\begin{claimclone}{claim:higherpayoff}
Consider a game $G_{\Pi,C,F}$ under Assumption~\ref{ass:genric}. 
If the game has more than one stable configuration,
then for every stable configuration $s$ there exist a
miner $p$ and a stable configuration $s'$ s.t.\ $u_{p}(s') >
u_{p}(s)$.

\end{claimclone}

\begin{proof}

Consider a stable configuration $s$.
By assumption, there exists another stable
configuration $s' \neq s$.
Therefore, there is a player $p$ and coins
$c \neq c'$ s.t.\ $p \in P_{c}(s)$ and $p \in
P_{c'}(s')$.
By Assumption~\ref{ass:genric}, $\frac{F(c)}{M_{c}(s)}
\neq \frac{F(c')}{M_{c'}(s')}$.
Thus, $u_{p}(s') \neq u_{p}(s)$.
If $u_{p}(s') > u_{p}(s)$, then we are done.
Otherwise, by Observation~\ref{obs:optimal},
there is another player $p' \neq p$ s.t.\ $u_{p'}(s') >
u_{p'}(s)$.

\end{proof}

\begin{claim}
\label{claim:SSTBS}
Consider a game $G_{\Pi,C,F}$, a configuration $s \in S$, a coin
$c \in C$, and two miners $p,p' \in P_{c}(s)$ s.t.\ $m_{p} \leq
m_{p'}$.
If $p$ is stable in $s$, then $p'$ is stable in $s$ as well.

\end{claim}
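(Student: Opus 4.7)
The plan is to exploit the fact that both $p$ and $p'$ already sit on the same coin $c$, so their current per-unit revenue coincides at $RPU_c(s) = F(c)/M_c(s)$. Hence stability for either miner reduces to a condition on whether \emph{moving away} from $c$ to some other coin $c''$ can beat $F(c)/M_c(s)$. Since the payoff from moving is $m_{\text{mover}} \cdot F(c'')/(M_{c''}(s) + m_{\text{mover}})$, the relevant per-unit revenue at the destination is $F(c'')/(M_{c''}(s) + m_{\text{mover}})$, which is \emph{strictly decreasing} in the mover's mining power. So a heavier mover faces a strictly worse alternative at every $c''$, which is exactly the intuition that should drive the proof.

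Concretely, I would proceed as follows. First, fix any candidate destination $c'' \neq c$ and write out the stability of $p$ at $s$: $u_p(s) \geq u_p((s_{-p}, c''))$, which after canceling $m_p$ on both sides becomes
\[
\frac{F(c)}{M_c(s)} \;\geq\; \frac{F(c'')}{M_{c''}(s) + m_p}.
\]
Second, invoke $m_{p'} \geq m_p$ to get $M_{c''}(s) + m_{p'} \geq M_{c''}(s) + m_p$, which yields
\[
\frac{F(c'')}{M_{c''}(s) + m_{p'}} \;\leq\; \frac{F(c'')}{M_{c''}(s) + m_p} \;\leq\; \frac{F(c)}{M_c(s)}.
\]
Third, multiply through by $m_{p'}$ to recover $u_{p'}((s_{-p'}, c'')) \leq u_{p'}(s)$, and note that this holds for every $c'' \neq c$. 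This is precisely the statement that $p'$ has no better response step at $s$, so $p'$ is stable.

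I do not foresee any real obstacle here; the whole argument is one monotonicity observation. The only subtlety worth flagging in the write-up is that the two miners share coin $c$, so the current RPU that both measure against is literally the same quantity $F(c)/M_c(s)$ --- this is what makes it legitimate to transfer the stability inequality of $p$ directly to $p'$ once the dependence on mining power is isolated to the destination's denominator. No appeal to Assumption~\ref{ass:genric}, Assumption~\ref{ass:alone}, or Theorem~\ref{theorm:generalPotential} is needed; it is a purely local calculation at the single configuration $s$.
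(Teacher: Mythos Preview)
Your proof is correct and follows essentially the same route as the paper's: cancel the mover's mining power from the stability inequality for $p$, use the monotonicity $m_{p'} \geq m_p$ to bound the destination's denominator, and then multiply through by $m_{p'}$ to recover the stability inequality for $p'$. The only cosmetic difference is that the paper writes the inequality for all $c' \in C$ rather than restricting to $c'' \neq c$, but the case $c'' = c$ is trivial.
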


\begin{proof}
  
  Since $p$ is stable in $s$,
  we get that $F(c) \frac{m_{p}}{M_{c}(s)} \geq
  F(c') \frac{m_{p}}{M_{c'}(s) + m_{p}} $ for every $c'
  \in C$.
  Thus, $F(c) \frac{1}{M_{c}(s)} \geq F(c')
  \frac{1}{M_{c'}(s) + m_{p}}$ for every $c' \in C$.
  Since, $m_{p} \leq m_{p'}$, it follows that $F(c)
  \frac{1}{M_{c}(s)}  \geq F(c') \frac{1}{M_{c'}(s) + m_{p'}}$ 
  for every $c' \in C$.
  Thus, $F(c) \frac{m_{p'}}{M_{c}(s)} \geq F(c')
  \frac{m_{p'}}{M_{c'}(s) + m_{p'}}$ for every $c' \in C$,
  and $p'$ is stable in $s$.

\end{proof}

\begin{claimclone}{claim:SRS}
Let $F$ be a reward function.
Consider a system $Q= \langle \Pi,C \rangle$, and a
configuration $s \in S_Q$.
Now consider another system $Q'= \langle \Pi',C \rangle$ s.t.\
$\Pi' = \Pi \cup \{p_{new}\}$, $p_{new} \not\in \Pi$, and
$m_{p_{new}} \leq min\{m_{p} | p \in \Pi \}$.
Let $c = \underset{c' \in C} \argmax ~F(c')
\frac{m_{p_{new}}}{M_{c'}(s) + m_{p_{new}}}$ and consider a
configuration $s' \in S_{Q'}$ s.t.\ for all $p \in \Pi$
$s'.p=s.p$ and $s'.p_{new} = c$.
Then $p_{new}$ is stable in $s'$ in game $G_{\Pi',C,F}$, and
every player $p \in \Pi$ that is stable in $s$ in $G_{\Pi,C,F}$
is also stable in $s'$ in $G_{\Pi',C,F}$.
\end{claimclone}

\begin{proof}

By construction of configuration $s'$, $M_{c}(s')= M_{c}(s)
+ m_{p_{new}}$ and $\forall c' \neq c$ $M_{c'}(s') =
M_{c'}(s)$.
Therefore, by the way we pick $c$, we get that $F(c)
\frac{m_{p_{new}}}{M_{c}(s')} \geq F(c')
\frac{m_{p_{new}}}{M_{c'}(s')}$ for all $c' \in C$, and thus  
$p_{new}$ is stable in $s'$.
Now consider a player $p \in \Pi$ that is stable in $s$, we show
that $p$ is stable also in $s'$.
Consider two cases:
\begin{itemize}
  
  \item First, $p \in P_{c'}(s')$ s.t.\ $c \neq c'$.
  By construction, $p \in P_{c'}(s)$, and
  since $p$ is stable in $s$ we know that $F(c')
  \frac{m_{p}}{M_{c'}(s)} \geq F(c'')
  \frac{m_{p}}{M_{c''}(s) + m_{p}}$ for every $c'' \in C$.
  Now since $M_{c}(s') > M_{c}(s)$ and for all $c'' \neq c$,
  $M_{c''}(s') = M_{c''}(s)$, we get $F(c')
  \frac{m_{p}}{M_{c'}(s')} \geq F(c'')
  \frac{m_{p}}{M_{c''}(s') + m_{p}}$ for every $c'' \in C$.
  Meaning that $p$ is stable in $s'$.

  \item Second, $p \in P_{c}(s')$.
  Since $p_{new}$ is stable in $s'$ and $m_p \geq
  m_{p_{new}}$, we get by Claim~\ref{claim:SSTBS} that $p$ is
  stable in $s'$.
  
\end{itemize}

\end{proof}

\begin{lemma}
\label{claim:eqNotEnique}

Any game $G_{\Pi,C,F}$ under Assumptions~\ref{ass:alone}
and~\ref{ass:genric} has at least two different stable
configurations.

\end{lemma}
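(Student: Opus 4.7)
The plan is to build two distinct stable configurations by induction on the number of miners, processed in decreasing order of mining power, maintaining the weaker invariant ``at most one miner is possibly unstable''. At the end, I will strengthen this to genuine stability using Assumption~\ref{ass:alone} and Claim~\ref{claim:SSTBS}.

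For the base case with $\Pi_2 = \{p_1, p_2\}$, I would pick the two highest-reward coins $c_1, c_2 \in C$ (distinct since Assumption~\ref{ass:genric} applied with $P = P' = \{p_1\}$ forces all $F$-values to be distinct) and set $\sigma^{(1)} = \langle c_1, c_2 \rangle$ (meaning $p_1$ at $c_1$, $p_2$ at $c_2$) and $\sigma^{(2)} = \langle c_2, c_1 \rangle$. In $\sigma^{(1)}$, $p_1$ sits alone at the top-reward coin and receives $F(c_1)$, which no deviation can beat: sharing $c_2$ with $p_2$ yields $m_{p_1} F(c_2)/(m_{p_1}+m_{p_2}) < F(c_2) < F(c_1)$, and moving alone to any $c_k$ with $k \ge 2$ gives $F(c_k) < F(c_1)$. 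By symmetry, $p_2$ is stable in $\sigma^{(2)}$. So each base configuration has at most one possibly unstable miner, and $p_1, p_2$ occupy different coins in both. I would then extend each base configuration to a configuration of $G_{\Pi,C,F}$ by adding $p_3, p_4, \ldots, p_n$ one at a time, invoking Claim~\ref{claim:SRS} at step $i$ with $p_{new} = p_i$; the hypothesis $m_{p_i} \le \min_{j<i} m_{p_j}$ holds by the global ordering, so each step makes $p_i$ stable while preserving the stability of all previously stable miners. The invariant is therefore carried through, and since the positions of $p_1, p_2$ are never overwritten, the two resulting full-game configurations $s^{(1)}, s^{(2)}$ remain distinct.

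The final and most delicate step is to rule out actual instability of the inherited ``possibly unstable'' miner $p_{ns}$ in each $s^{(k)}$, and it is the main obstacle. Let $c^\star = s^{(k)}.p_{ns}$. If $|P_{c^\star}(s^{(k)})| = 1$, Assumption~\ref{ass:alone} applied to $G_{\Pi,C,F}$ yields some miner $p$ whose better response is to move to $c^\star$; this $p$ cannot be $p_{ns}$ itself, nor any other miner (all others are stable by the invariant), a contradiction. Hence $p_{ns}$ shares $c^\star$ with some miner $p'$, and because the base case places $p_1, p_2$ on distinct coins, $p'$ must be one of the later-added miners $p_3, \ldots, p_n$, so $m_{p'} \le m_{p_{ns}}$ and $p'$ is stable by the invariant. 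Claim~\ref{claim:SSTBS} then promotes the stability of the smaller $p'$ to that of the heavier $p_{ns}$, so both $s^{(1)}$ and $s^{(2)}$ are genuine equilibria. This closure step is the only place where Assumption~\ref{ass:alone} is invoked, and it is precisely why the base case had to be designed so that $p_1$ and $p_2$ sit on different coins.
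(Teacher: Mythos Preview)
Your proposal is correct and follows essentially the same approach as the paper: the same two base configurations $\langle c_1,c_2\rangle$ and $\langle c_2,c_1\rangle$ on $\Pi_2$, the same inductive extension via Claim~\ref{claim:SRS}, and the same closure argument combining Assumption~\ref{ass:alone} with Claim~\ref{claim:SSTBS} to stabilize the lone possibly-unstable miner. The only cosmetic difference is your explicit appeal to Assumption~\ref{ass:genric} to get $F(c_1)\neq F(c_2)$, which is true but not actually needed (weak inequality $F(c_1)\ge F(c_2)$ already suffices for $p_1$'s stability in the base case).
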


\begin{proof}

Let $p_1,\ldots,p_n$ be the miners in $\Pi$ sorted in decreasing
mining power, i.e., $m_{p_1} \geq m_{p_2} \geq
\ldots \geq m_{p_n}$ and let $c_1,\ldots,c_l$ be the coins in $C$
sorted in decreasing coin rewards, i.e., $F(c_1) \geq F(c_2)
\geq \ldots \geq F(c_l)$.
Note that through the proof we construct several games, but we
assume Assumptions~\ref{ass:alone} and~\ref{ass:genric} only in
the game $G_{\Pi,C,F}$.
Let $\Pi_1,\ldots,\Pi_n$ be a sequence of sets of miners
s.t.\ $\Pi_k = \{p_1,\ldots,p_k\}$, $1 \leq k \leq n$.
Next consider two configurations $s^2_1,s^2_2$ in game
$G_{\Pi_2,C,F}$: $s^2_1 = \langle c_1, c_2 \rangle$ (i.e.,
$s^2_1.p_1 = c_1$ and $s^2_1.p_2 = c_2$) and $s^2_2 = \langle
c_2, c_1 \rangle$.
Note that $s^2_1 \neq s^2_2$, and since $F(c_1) \geq F(c_2)$,
$p_1$ is stable in $s^2_1$ and $p_2$ is stable in $s^2_2$.

We now use $s^2_1, s^2_2$ to inductively construct a sequence
of tuples of configurations\\ $\langle s^2_1, s^2_2 \rangle,
\langle s^3_1, s^3_2 \rangle,\ldots, \langle s^n_1, s^n_2
\rangle$, where $\forall ~ 2 \leq k \leq n$, $s^k_1, s^k_2$ are
two configurations in Game $G_{\Pi_k,C,F}$.
For every $ 3 \leq k \leq n$ let $x_1 = \underset{c \in C}
\argmax ~F(c) \frac{m_{p_k}}{M_{c}(s^{k-1}_1) + m_{p_k}}$
and  $x_2 = \underset{c \in C}
\argmax ~F(c) \frac{m_{p_k}}{M_{c}(s^{k-1}_2) + m_{p_k}}$,
we construct $s^k_1 = s^{k-1}_1 \times x_1$ (i.e., $\forall
1 \leq i < k$, $s^k_1.p_i = s^{k-1}_1.p_i$ and $s^k_1.p_k = x_1$)
and $s^k_2 = s^{k-1}_2 \times x_2$.

Note that since $s^2_1 \neq s^2_2$, we get by construction that
$s^n_1 \neq s^n_2$.
It remains to show that configurations $s^n_1$ and $s^n_2$
are stable.
Assume by way of contradiction that it is not the case, and
assume w.l.o.g. that $s^n_1$ is not stable.
By inductively applying Claim~\ref{claim:SRS}, and since $p_1$
is stable in $s^2_1$ we get that players $p_1$ and
$p_3,\ldots,p_n$ are stable in $s^n_1$. 
Thus $p_2$ is not stable.
Recall that $s^2_1.p_2 = c_2$, and thus by construction
$s^n_1.p_2 = c_2$.
Recall also that we assume Assumptions~\ref{ass:alone}
and~\ref{ass:genric}, and consider two cases:
\begin{itemize}

  \item First, $|P_{c_2}(s^n_1)| = 1$ . 
  By Assumption~\ref{ass:alone}, there is a miner $p \in \Pi$
  s.t.\ changing $s^n_1.p$ to $c_2$ is a better response step
  for $p$.
  Thus, $p$ is not stable in $s^n_1$.
  In addition, by definition of better response step, we know
  that $p_i \neq p_2$.
  A contradiction to $p_2$ being the only not stable miner in
  $s^n_1$.
  

  \item Second, $|P_{c_2}(s^n_1)| > 1$.
  Since $p_1 \in P_{c_1}(s^2_1)$, we get that there is a stable 
  miner $p \in P_{c_2}(s^2_1)$ s.t.\ $p \in
  \{p_3,\ldots,p_n\}$,
  and thus, $m_2 \geq m_{p}$.
  Therefore, by Claim~\ref{claim:SSTBS}, we get that $p_2$ is
  stable in $s^n_1$.
  A contradiction.
  
\end{itemize}

\end{proof}

\section{Reward design: proof of Lemma~\ref{lem:inv}}
\label{app:rewardDesign}

\begin{lemmaclone}{lem:inv}
Consider a configuration $s \in T_i \setminus \{s_i\}$.
Then every better response learning in the game
$G_{\Pi,C,H_i(s)}$ that starts at $s$ converges to a
configuration $s' \in T_i$ such that:
\begin{enumerate}
  
  \item $\forall k, ~1 \leq k < m_i(s)$, $s'.p_k = s.p_k$. 
  
  \item  $s'.p_{m_i(s)} = s_f.p_i$.
  
\end{enumerate}

\end{lemmaclone}

\begin{proof}

Let $c = s_f.p_{i-1}$ and $c'=s_f.p_{i}$.
By definition of $m_i(s)$, and since $s \neq s_i$ we know
that $p_{m_i(s)} \in P_{c}(s)$.
We first show that the only better response step in
configuration $s$ in game $G_{\Pi,C,H_i(s)}$ is that
$p_{m_i(s)}$ moves to $c'$.
Since $\forall c'' \neq c'$, $RPU_{c''}(s) = R(s)$ and
$RPU_{c'}(s) > R(s)$, we get that no miner has a better response
step to move to any coin $c'' \neq c'$.
Since $m_{p_{m_i(s)}} < m_{p_{a_i(s)}}$, we get that
$RPU_{c'}((s_{-p_{m_i(s)}}, c')) =
\frac{R(s) \cdot (M_{c'}(s) + m_{p_{a_i(s)}})}{M_{c'}(s) +
m_{p_{m_i(s)}}} > R(s) = RPU_{c}(s)$, and thus moving to $c'$ is
a better response step for $p_{m_i(s)}$ in $s$.
Now consider a miner $p_k \not\in P_{c'(s)}$ s.t.\ $k \neq
m_i(s)$.
By definition of $m_i(s)$, $k< m_i(s)$, thus $m_{p_k} >
m_{p_{m_i(s)}}$, and thus, $\frac{R(s) \cdot (M_{c'}(s) +
m_{p_{a_i(s)}})}{M_{c'}(s) + m_{p_{k}}} \leq R(s)$.
All in all, we get that the only better response step in
configuration $s$ is that $p_{m_i(s)}$ moves to $c'$.

Let $s^0 = (s_{-p_{m_i(s)}},c')$ be the configuration reached
after $p_{m_i(s)}$ takes its step.
We next prove by induction that every configuration $s'$ that is
reached by better response steps staring from $s^0$ 
satisfies the following properties:


\begin{itemize}

  \item[$\Psi_1$.] $\forall k, ~1 \leq k < m_i(s)$: $s'.p_k =
  s.p_k$.
  \label{en:p2}
  
    \item[$\Psi_2$.]  $s'.p_{m_i(s)} = c'$.
   
  \label{en:p1}
  
  \item[$\Psi_3$.] $\forall k, ~m_i(s) < k \leq n$: $s'.p_k \in
  \{c, c'\}$.
  \label{en:p3}
  
%
%
  
  \item[$\Psi_4$.]$M_c(s^0) \leq M_{c}(s') \leq M_c(s) $.
  
  \label{en:p7}
  
   \item[$\Psi_5$.]$M_{c'}(s) \leq M_{c'}(s')
   \leq M_{c'}(s^0)$.
   
  \label{en:p8}
\end{itemize}

\noindent Note that since $s \in T_i$, $\Psi_1 - \Psi_3$ imply
that $s' \in T_i$.

\textbf{Base.} We show below that the properties $\Psi_{1} -
\Psi_{5}$ are satisfied for configuration $s^0$.

\begin{itemize}
  
  \item[$\Psi_1 - \Psi_3$.] Follow by definitions of
  $m_i(s)$ and $T_i$, and since $s^0 =(s_{-p_{m_i(s)}},c')$.
  
  \item[$\Psi_4 - \Psi_5$.] Trivially follows.
  
%
  
  
%

  
%
%
   
\end{itemize}

\textbf{Induction step.} Consider two configurations $s^1,s^2
\in S$ s.t.\ a better response step of some miner leads
from $s^1$ to $s^2$, and $s^1$ satisfies properties
$\Psi_{1} - \Psi_{5}$.
We show that $s_2$ satisfies properties $\Psi_{1} -
\Psi_{5}$ as well.

\paragraph{Useful equations for configuration $s^1$.}
We start by proving two useful equations on the RPUs of coins $c$
and $c'$ in configuration $s^1$ using $\Psi_{4}$ and
$\Psi_{5}$:

By $\Psi_{4}$, $M_{c}(s^1) \leq M_{c}(s) $.
In addition, since $H_i(s)(c) = R(s)\cdot M_{c}(s)$, we get that
\begin{equation} 
\label{eq:RPUc}
RPU_{c}(s^1) =
\frac{H_i(s)(c)}{M_{c}(s^1)}= \frac{R(s)\cdot
M_{c}(s)}{M_{c}(s^1)} \geq \frac{R(s) \cdot M_{c}(s)}{M_{c}(s)} = R(s).
\end{equation}
By $\Psi_5$, $M_{c'}(s^1) \leq M_{c'}(s^0)$. 
In addition, since (1)
$m_{p_{m_i(s)}} < m_{p_{a_i(s)}}$, (2) $M_{c'}(s)=
M_{c'}(s^0) - m_{p_{m_i(s)}}$, and (3) $H_i(s)(c') = R(s)\cdot
(M_{c'}(s) + m_{p_{a_i(s)}})$, we get that
\begin{equation}
\label{eq:RPUc'}
\begin{split}
     RPU_{c'}(s^1) &= \frac{H_i(s)(c')}{M_{c'}(s^1)}=
     \frac{R(s) \cdot (M_{c'}(s) + m_{p_{a_i(s)}})}{M_{c'}(s^1)}
     \\
     & =\frac{R(s)\cdot (M_{c'}(s^0) - m_{p_{m_i(s)}} +
     m_{p_{a_i(s)}})}{M_{c'}(s^1)}  \\
     &\geq
     \frac{R(s)\cdot M_{c'}(s^0)}{M_{c'}(s^0)} = R(s).
\end{split}
\end{equation}

\paragraph{Useful equations for configurations $s^1$ and $s^2$.}
We now prove two useful equations on the RPUs of coins not in
$\{c,c'\}$ in configurations that satisfy $\Psi_{1} -
\Psi_{3}$.

By definition of $H_i$, $\forall c'' \not\in \{c,c'\}$,
$H_i(s)(c'')= R(s)\cdot M_{c''}(s)$.
Since $ s \in T_i$, $\forall k, i \leq k \leq n: ~s.p_k
\in \{c,c'\}$, and by definition of $m_i(s)$, we know that
$m_i(s) \geq i$.
Therefore, by $\Psi_1 - \Psi_3$, we get that
\begin{equation}
\label{eq:Mc''}
\forall c'' \not\in \{c,c'\}, ~M_{c''}(s^1) = M_{c''}(s),
\end{equation}
and thus
\begin{equation}
\label{eq:RPUc''}
\forall c'' \not\in \{c,c'\}, ~RPU_{c''}(s^1)
=\frac{H_i(s)(c'')}{M_{c''}(s^1)} 
=\frac{R(s)\cdot M_{c''}(s)}{M_{c''}(s)}
= R(s).
\end{equation}

\paragraph{Induction step proof.} We are now ready to prove that
$s_2$ satisfies properties $\Psi_{1} - \Psi_{5}$.

\begin{itemize}
  \item[$\Psi_1 - \Psi_2$.] Since $\Psi_1$ and $\Psi_2$ hold in
  $s^1$, we only need to show that $p_1,\ldots,p_{m_i(s)}$ are
  stable in $s^1$.
  Let $k \in \{1,\ldots,m_i(s)\}$ and note that $m_{p_k} \geq
  m_{p_{m_i(s)}}$.
  By Equation~\ref{eq:RPUc''}, for all $c'' \not\in \{c,c'\}$,
  $RPU_{c''}(s^1) = R(s)$, and by Equations~\ref{eq:RPUc}
  and~\ref{eq:RPUc'}, we know that $RPU_{c}(s^1) \geq R(s)$ and
  $RPU_{c'}(s^1) \geq R(s)$, respectively.
  Therefore, it remains to show that $p_k$ does not have a
  better response step to $c$ or $c'$:
  \begin{itemize}
    
    \item  By~$\Psi_{4}$, $M_{c}(s^1) \geq M_{c}(s^0)$, and thus
    we get that \[\frac{H_i(c)}{M_{c}(s^1) +
    m_{p_k}} \leq \frac{H_i(c)}{M_{c}(s^0) + m_{p_{m_i(s)}}} =
    \frac{H_i(c)}{M_{c}(s)} = \frac{M_{c}(s)\cdot R(s)}{M_{c}(s)}
    = R(s).
    \]
    Therefore, $p_k$ does not have a better response step to $c$. 
    
    \item By $\Psi_2$, $p_{m_i(s)} \in P_{c'}(s^1)$.
    Therefore, if $p_k = p_{m_i(s)}$, then we are done.
    Otherwise, $k < m_i(s) = a_i(s) + 1$, so $m_{p_k}
    \geq m_{p_{a_i(s)}}$.
    By $\Psi_{5}$, $M_{c'}(s^1) \geq M_{c'}(s)$, and thus
    \[
    \frac{H_i(c')}{M_{c'}(s^1) + m_{p_k}} \leq
    \frac{H_i(c')}{M_{c'}(s) + m_{p_{a_i(s)}}}
    =  \frac{ R(s) \cdot (M_{c'}(s) + m_{p_{a_i(s)}})}{M_{c'}(s)
    + m_{p_{a_i(s)}}} = R(s).
    \]
    Therefore, $p_k$ does not have a better response step to
    $c'$.
  \end{itemize}  

  \item[$\Psi_3$.]
  By Equations~\ref{eq:RPUc} and~\ref{eq:RPUc'},
  $RPU_{c}(s^1) \geq R(s)$ and $RPU_{c'}(s^1) \geq R(s)$.
  By Equation~\ref{eq:RPUc''}, 
  for all $c'' \not\in \{c, c'\}$, $RPU_{c''}(s^1) =
  R(s)$.
  Therefore, since by the inductive assumption ($\Psi_3$), miners
  $ p_{m_i(s)+1},\ldots,p_n$ are in $P_{c}(s^1) \cup
  P_{c'}(s^1)$, we get that none of them has a better response
  step to move to a coin $c'' \not\in \{c,c'\}$, and thus
  $\Psi_3$ holds in $s^2$ as well.
  
%
%
  
  \item[$\Psi_4$.] 
  By definitions of $s^0$, $m_i(s)$, and $T_i$, we know that
  $P_{c}(s^0) \cap \{p_{m_i(s)},\ldots,p_n\} = \emptyset$.
  Since $\Psi_1$ holds in $s^0$ and in $s^1$, we get that $\forall
  k, ~1 \leq k < m_i(s)$: $s^1.p_k = s^0.p_k$.
  Therefore, we get that $M_c(s^0) \leq M_{c}(s^2)$. 
  It remains to show that $M_{c}(s^{2}) \leq
  M_{c}(s)$.
  By Equation~\ref{eq:RPUc''}, for all $c'' \not\in \{c,c'\}$,
  $RPU_{c''}(s^1) = R(s)$.
  By Equations~\ref{eq:RPUc} and~\ref{eq:RPUc'}, 
  $RPU_{c}(s^1) \geq R(s)$ and $RPU_{c'}(s^1) \geq R(s)$.
  Now assume by way of contradiction that $M_{c}(s^{2}) >
  M_{c}(s)$.
  Thus, 
    \[
   RPU_{c}(s^2) =
   \frac{H_i(c)}{M_{c}(s^2)} <
   \frac{ H_i(c)}{M_{c}(s)} =
   \frac{ R(s) \cdot M_{c}(s)}{M_{c}(s)} 
   = R(s).
  \]
  Now since $RPU_{c}(s^2) < R(s) \leq RPU_{c}(s^1)$, we get that
  $s^2 =(s^1_{-p},c)$ for some $p \in \Pi$.
  A contradiction to Observation~\ref{ob:movingImproving}. 
  
  \item[$\Psi_5$.] Since we already showed that $s^2$ satisfies
  $\Psi_1 - \Psi_3$, by Equation~\ref{eq:Mc''}, we get that
  $\forall c'' \not\in \{c,c'\}, ~M_{c''}(s^2) = M_{c''}(s)$.
  In addition, since $s^0 =(s_{-p_{m_i(s)}},c')$, $\forall c''
  \not\in \{c,c'\}, ~M_{c''}(s^0) = M_{c''}(s)$.
  Therefore, we get that $\forall c''
  \not\in \{c,c'\}, ~M_{c''}(s) = M_{c''}(s^0) = M_{c''}(s^2)$,
  and thus $M_{c}(s^2) + M_{c'}(s^2) = M_{c}(s^0)
  + M_{c'}(s^0) = M_{c}(s) + M_{c'}(s)$. 
  Hence, $\Psi_5$ follows from $\Psi_4$.
  
\end{itemize}

Now together with Theorem~\ref{theorm:generalPotential}, we
know that every better response learning in the game
$G_{\Pi,C,H_i(s)}$ that starts at $s$ converges to some
configuration $s'$ that satisfies $\Psi_1 - \Psi_5$.
Since $s \in T_i$, we get that
\[ (\forall k, ~1 \leq k \leq i-1~:
s.p_k=s_f.p_k) \wedge (\forall k, i \leq k \leq n:~ s.p_k \in
\{s_f.p_i, s_f.p_{i-1}\}).
\]
And since $s \neq s_i$, we get that $i \leq m_i(s)$.
Thus, by $\Psi_1$, $\forall k, ~1 \leq k \leq i-1 :
s'.p_k=s_f.p_k$.
In addition, by $\Psi_3$, we get that
$\forall k, ~i \leq k \leq n, ~s.p_k \in \{s_f.p_i,
s_f.p_{i-1}\}$.
Therefore, $s' \in T_i$, and the lemma follows from $\Psi_1$ and
$\Psi_2$.

\end{proof}
\end{appendices}

\end{document}